\pgfplotsset{compat=1.16}
\theoremstyle{plain}
\newtheorem{lemma}{Lemma}
\newtheorem{theorem}{Theorem}
\newtheorem{corollary}{Corollary}
\newtheorem{proposition}{Proposition}
\theoremstyle{remark}
\newtheorem{remarknew}{Remark}
\DeclareSymbolFont{rsfscript}{OMS}{rsfs}{m}{n}
\DeclareSymbolFontAlphabet{\mathrsfs}{rsfscript}
\DeclareMathOperator{\dt}{.}
\DeclareMathOperator{\excl}{\mathrm{excl}}
\DeclareMathOperator{\dupl}{\mathrm{dupl}}
\DeclareMathOperator{\Cay}{\mathrm{Cay}}
\newcommand{\cra}{completely reachable automata}
\newcommand{\cran}{completely reachable automaton}
\newcommand{\scc}{strongly connected component}
\newcommand{\scn}{strongly connected}
\newcommand{\mA}{\mathrsfs{A}}
\newcommand{\mB}{\mathrsfs{B}}
\newcommand{\mE}{\mathrsfs{E}}
\newcommand{\Z}{\mathbb{Z}}
\begin{document}

\title[Binary Completely Reachable Automata]{Binary completely reachable automata}

\author[D. Casas, M. V. Volkov]{David Casas \and Mikhail V. Volkov}

\address{Institute of Natural Sciences and Mathematics, Ural Federal University\\ 620000 Ekaterinburg, Russia\\
dafecato4@gmail.com, m.v.volkov@urfu.ru}

\thanks{The authors were supported by the Ministry of Science and Higher Education of the Russian Federation, project FEUZ-2020-0016.}

\begin{abstract}
A deterministic finite automaton in which every non-empty set of states occurs as the image of the whole state set under the action of a suitable input word is called completely reachable. We study completely reachable automata with two input letters.
\end{abstract}

\keywords{Deterministic finite automaton; Complete reachability; Strongly connected graph; Tree}
\maketitle

\section{Introduction}
\label{sec:intro}

\emph{Completely reachable automata} are complete deterministic finite automata in which every non-empty subset of the state set occurs as the image of the whole state set under the action of a suitable input word. Such automata appeared in the study of descriptional complexity of formal languages~\cite{Maslennikova:2012,BondarVolkov16} and in relation to the \v{C}ern\'{y} conjecture~\cite{Don16}. A systematic study of \cra\ was initiated in~\cite{BondarVolkov16,BondarVolkov18} and continued in \cite{BCV:2022}. In \cite{BondarVolkov18,BCV:2022} completely reachable automata were characterized in terms of a certain finite sequence of directed graphs (digraphs): the automaton is completely reachable if and only if the final digraph in this sequence is strongly connected. In \cite[Theorem 11]{BCV:2022} it was shown that given an automaton $\mA$ with $n$ states and $m$ input letters, the $k$-th digraph in the sequence assigned to $\mA$ can be constructed in $O(mn^{2k}\log n)$ time. However, this does not yet ensure a polynomial-time algorithm for recognizing complete reachability: a series of examples in \cite{BCV:2022} demonstrates that the length of the digraph sequence for an automaton with $n$ states may reach $n-1$.

Here we study \cra\ with two input letters; for brevity, we call automata with two input letters \emph{binary}. Our main results provide a new characterization of binary \cra, and the characterization leads to a  quasilinear time algorithm for recognizing complete reachability for binary automata.

Our prerequisites are minimal: we only assume the reader's acquaintance with basic properties of strongly connected digraphs, subgroups, and cosets.

\section{Preliminaries}
\label{sec:background}

A \emph{complete deterministic finite automaton} (DFA) is a triple $\mathrsfs{A}=\langle Q,\Sigma,\delta\rangle$ where $Q$ and $\Sigma$ are finite sets called the \emph{state set} and, resp., the \emph{input alphabet} of $\mA$, and $\delta\colon Q\times\Sigma\to Q$ is a totally defined map called the \emph{transition function} of $\mA$.

The elements of $\Sigma$ are called \emph{input letters} and finite sequences of letters are called \emph{words over $\Sigma$}. The empty sequence is also treated as a word, called the \emph{empty word} and denoted $\varepsilon$. The collection of all words over $\Sigma$ is denoted $\Sigma^*$.

The transition function $\delta$ extends to a function $Q\times\Sigma^*\to Q$ (still denoted by $\delta$) via the following recursion: for every $q\in Q$, we set $\delta(q,\varepsilon)=q$  and $\delta(q,wa)=\delta(\delta(q,w),a)$ for all $w\in\Sigma^*$ and $a\in\Sigma$. Thus, every word $w\in\Sigma^*$ induces the transformation $q\mapsto\delta(q,w)$ of the set $Q$. The set $T(\mA)$ of all transformations induced this way is called the \emph{transition monoid} of $\mA$; this is the submonoid generated by the transformations $q\mapsto\delta(q,a)$, $a\in\Sigma$, in the monoid of all transformations of $Q$. A DFA $\mathrsfs{B}=\langle Q,\Theta,\zeta\rangle$ with the same state set as $\mA$ is said to be \emph{syntactically equivalent} to $\mA$ if $T(\mB)=T(\mA)$.

The function $\delta$ can be further extended to non-empty subsets of the set $Q$. Namely, for every non-empty subset $P\subseteq Q$ and every word $w\in\Sigma^*$, we let $\delta(P,w)=\{\delta(q,w)\mid q\in P\}$.

Whenever there is no risk of confusion, we tend to simplify our notation by suppressing the sign of the transition function; this means that we write $q\dt w$ for $\delta(q,w)$ and $P\dt w$ for $\delta(P,w)$ and specify a DFA as a pair $\langle Q,\Sigma\rangle$.

We say that a non-empty subset $P\subseteq Q$ is \emph{reachable} in $\mathrsfs{A}=\langle Q,\Sigma\rangle$ if $P=Q\dt w$ for some word $w\in\Sigma^*$. A DFA is called \emph{completely reachable} if every non-empty subset of its state set is reachable. Observe that complete reachability is actually a property of the transition monoid of $\mA$; hence, if a DFA $\mA$ is completely reachable, so is any DFA that is syntactically equivalent to $\mA$.

Given a DFA $\mathrsfs{A}=\langle Q,\Sigma\rangle$ and a word $w\in\Sigma^*$, the \emph{image} of $w$ is the set $Q\dt w$ and the \emph{excluded set} $\excl(w)$ of $w$ is the complement $Q{\setminus}Q\dt w$ of the image. The number $|\excl(w)|$ is called the \emph{defect} of $w$. If a word $w$ has defect~1, its excluded set consists of a unique state called the \emph{excluded state} for $w$. Further, for any $w\in\Sigma^*$, the set $\{p\in Q\mid p=q_1\dt w=q_2\dt w \ \text{ for some }\ q_1\ne q_2\}$ is called the \emph{duplicate set} of $w$ and is denoted by $\dupl(w)$. If $w$ has defect~1, its duplicate set consists of a unique state called the \emph{duplicate state} for $w$. We identify singleton sets with their elements, and therefore, for a word $w$ of defect~1, $\excl(w)$ and $\dupl(w)$ stand for its excluded and, resp., duplicate states.

For any $v\in\Sigma^*$, $q\in Q$, let $qv^{-1}=\{p\in Q\mid p\dt v=q\}$. Then for all $u,v\in\Sigma^*$,
\begin{align}
\label{eq:exclprod}\excl(uv)&=\{q\in Q\mid qv^{-1}\subseteq\excl(u)\},\\
\label{eq:duplprod}\dupl(uv)&=\{q\in Q\mid qv^{-1}\cap\dupl(u)\ne\varnothing\ \text{ or }\ |qv^{-1}{\setminus}\excl(u)|\ge 2\}.
\end{align}
The equalities \eqref{eq:exclprod} and \eqref{eq:duplprod} become clear as soon as the definitions of $\excl(\ )$ and $\dupl(\ )$ are deciphered. . Fig.~\ref{fig:product} provides a supporting illustration.
\begin{figure}[htb]
\begin{center}
\unitlength=0.9mm
\linethickness{0.4pt}
\gasset{Nw=2,Nh=2,AHLength=2,AHdist=3}
\begin{picture}(60,55.00)(0,5)
\node(A1)(5.00,5.00){}
\node(A2)(5.00,15.00){}
\node(A3)(5.00,25.00){}
\node(A4)(5.00,35.00){}
\node(A5)(5.00,45.00){}
\node(A6)(5.00,55.00){}
\node(B1)(30.00,5.00){}
\node(B2)(30.00,15.00){}
\node(B3)(30.00,25.00){}
\node(B4)(30.00,35.00){}
\node(B5)(30.00,45.00){}
\node(B6)(30.00,55.00){}
\node(C1)(55.00,5.00){}
\node(C2)(55.00,15.00){}
\node(C3)(55.00,25.00){}
\node(C4)(55.00,35.00){}
\node(C5)(55.00,45.00){}
\node(C6)(55.00,55.00){}
\put(17.00,57.00){\makebox(0,0)[cb]{$u$}}
\put(42.00,57.00){\makebox(0,0)[cb]{$v$}}
\put(56,19){$\left.\rule{0pt}{7mm}\right\}\ \dupl(uv)$}
\put(56,44){$\left.\rule{0pt}{13mm}\right\}\ \excl(uv)$}
\put(15,49){$\excl(u)\!\left\{\rule{0pt}{7mm}\right.$}
\drawedge(A1,B1){}
\drawedge[eyo=-.5](A2,B2){}
\drawedge(A3,B2){}
\drawedge[eyo=.5](A4,B2){}
\drawedge(A5,B3){}
\drawedge(A6,B4){}
\drawedge(B1,C1){}
\drawedge(B2,C2){}
\drawedge[eyo=-.5](B3,C3){}
\drawedge(B4,C3){}
\drawedge[eyo=.5](B5,C3){}
\drawedge(B6,C4){}
\end{picture}
\end{center}
\caption{An illustration for the equalities \eqref{eq:exclprod} and \eqref{eq:duplprod}}\label{fig:product}
\end{figure}
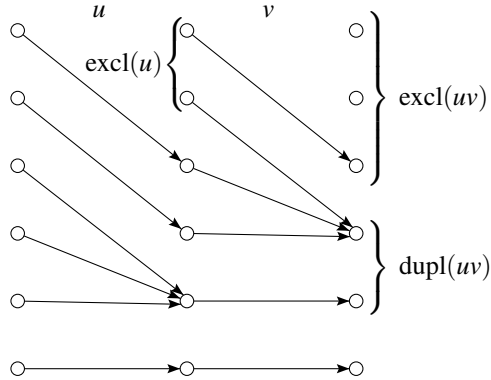

Recall that DFAs with two input letters are called \emph{binary}. The question of our study is: under which conditions is a binary DFA completely reachable? The rest of the section presents a series of reductions showing that to answer this question, it suffices to analyze DFAs of a specific form.

Let $\mathrsfs{A}=\langle Q,\{a,b\}\rangle$ be a binary DFA with $n>1$ states. If neither $a$ nor $b$ has defect~1, no subset of size $n-1$ is reachable in $\mA$. Therefore, when looking for binary \cra, we must focus on DFAs possessing a letter of defect 1. We will always assume that $a$ has defect 1.

The image of every non-empty word over $\{a,b\}$ is contained in either $Q\dt a$ or $Q\dt b$. If the defect of $b$ is greater than or equal to 1, then at most two subsets of size $n-1$ are reachable (namely, $Q\dt a$ and $Q\dt b$), whence $\mathrsfs{A}$ can only be completely reachable provided that $n=2$. The automaton $\mathrsfs{A}$ is then nothing but the classical flip-flop, see Fig.~\ref{fig:flip-flop}.
\begin{figure}[hbt]
\begin{center}
\begin{tikzpicture}
	\node[fill=white, circle, draw=blue, scale=1] (0) {$0$};
	\node[fill=white, circle, draw=blue, scale=1, right of = 0, xshift= 2cm] (1) {$1$};
	\draw
		(0) edge[-latex, bend left, above]  node{$b$} (1)
		(1) edge[-latex, bend left, below] node{$a$}(0)
		(0) edge[-latex, loop left, left, in =135, out = -135, distance = 30] node{$a$} (0)
		(1) edge[-latex, loop right, right, out=45, in = -45, distance = 30] node{$b$} (1)
		;
\end{tikzpicture}
 \caption{The flip-flop. Here and below a DFA $\langle Q,\Sigma\rangle$ is depicted as a digraph with the vertex set $Q$ and a labeled edge $q\xrightarrow{a}q'$ for each triple $(q,a,q')\in Q\times\Sigma\times Q$ such that $q\dt a=q'$.}\label{fig:flip-flop}
\end{center}
\end{figure}
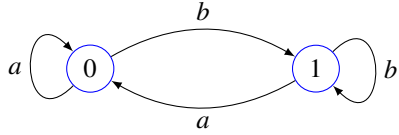

Having isolated this exception, we assume from now on that $n\ge2$ and the letter $b$ has defect 0, which means that $b$ acts as a permutation of $Q$. The following fact was first stated in~\cite{BondarVolkov16}; for a proof, see, e.g., \cite[Sect. 6]{BCV:2022}.

\begin{lemma}
\label{lem:cyclic}
If $\mathrsfs{A}=\langle Q,\{a,b\}\rangle$ is a \cran\ in which the letter $b$ acts as a permutation of $Q$, then $b$ acts as a cyclic permutation.
\end{lemma}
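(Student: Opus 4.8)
The plan is to work with excluded sets rather than images, exploiting \eqref{eq:exclprod}. Since complete reachability requires every $(n-1)$-element subset of $Q$ to be reachable, every state of $Q$ must occur as the excluded set of some word of defect~$1$. I will show, on the contrary, that the excluded state of \emph{any} defect-$1$ word necessarily lies in the $b$-orbit of $\excl(a)$. As $b$ is a permutation, this orbit is exactly one of the cycles of $b$, so forcing it to exhaust $Q$ immediately yields that $b$ is a cyclic permutation.

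The first step is a monotonicity observation. Because $b$ permutes $Q$, appending $b$ to a word leaves the image size unchanged, while appending $a$ applies a non-injective map and so cannot enlarge the image. Hence the defect is non-decreasing along any word, and for a word $w$ of defect~$1$ the defects of its prefixes form a sequence $0,\dots,0,1,\dots,1$. The single increase is produced by the letter $a$, since appending $b$ never changes the defect; and at that moment the excluded set passes from $\varnothing$ to $\{\excl(a)\}$, because by \eqref{eq:exclprod} the only state with empty $a$-preimage is $\excl(a)$.

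It remains to follow the excluded state along the defect-$1$ suffix of $w$ by induction, checking that it never leaves the orbit of $\excl(a)$. Appending $b$ sends the excluded singleton $\{s\}$ to $\{s\dt b\}$, which stays in the same $b$-orbit. Appending $a$ is the delicate case: from \eqref{eq:exclprod} one computes $\excl(\,\cdot\,a)=\{\excl(a),\,s\dt a\}$ unless $s\dt a=\dupl(a)$, in which case the two distinct preimages of $\dupl(a)$ cannot both lie in the singleton $\{s\}$ and the set collapses to $\{\excl(a)\}$. Since the defect must remain $1$, only the collapse is admissible, returning the excluded state to $\excl(a)$, which is in the orbit. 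This is the main obstacle: one must verify carefully, straight from \eqref{eq:exclprod}, that preserving defect~$1$ under $a$ forces precisely this collapse. The induction then gives that $\excl(w)$ lies in the $b$-orbit of $\excl(a)$ for every defect-$1$ word $w$; complete reachability makes this orbit all of $Q$, so $b$ is a single $n$-cycle, as required.
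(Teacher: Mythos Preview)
Your argument is correct. The paper does not actually prove this lemma; it cites \cite{BondarVolkov16} for the statement and \cite[Sect.~6]{BCV:2022} for a proof, so there is nothing in-paper to compare your approach against. Your method---tracking the excluded singleton along the prefixes of a defect-$1$ word and showing via \eqref{eq:exclprod} that it can never leave the $b$-orbit of $\excl(a)$---is clean and entirely self-contained.

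One small point worth tightening: you treat $\excl(a)$ and $\dupl(a)$ as single states throughout, i.e., you tacitly assume $a$ has defect exactly~$1$. In the paper's narrative this is established by the reductions immediately preceding the lemma, but from the bare statement alone it requires one line: since $b$ permutes $Q$ and (for $|Q|\ge 2$) complete reachability demands a word of defect~$1$, the first occurrence of $a$ in any such word sends $Q$ to an $(n{-}1)$-set, so $a$ itself has defect~$1$. The case $|Q|=1$ is trivial.
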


Taking Lemma~\ref{lem:cyclic} into account, we restrict our further considerations to DFAs with $n\ge2$ states and two input letters $a$ and $b$ such that $a$ has defect~1 and $b$ acts a cyclic permutation. Without any loss, we will additionally assume that these DFAs have the set $\mathbb{Z}_n=\{0,1,\dots,n-1\}$ of all residues modulo $n$ as their state set and the action of $b$ at any state merely adds 1 modulo $n$. Let us also agree that whenever we deal with elements of $\Z_n$, the signs $+$ and $-$ mean addition and subtraction modulo $n$, unless the contrary is explicitly specified.

Further, we will assume that $0=\excl(a)$ as it does not matter from which origin the cyclic count of the states start.

Since $b$ is a permutation, for each $k\in\Z_n$, the transformations $q\mapsto q\dt b^ka$ and $q\mapsto q\dt b$ generate the same submonoid in the monoid of all transformations of $\Z_n$ as do the transformations $q\mapsto q\dt a$ and $q\mapsto q\dt b$. This means that if one treats the word $b^ka$ as a new letter $a_{k}$, say, one gets the DFA $\mA_{k}=\langle \Z_n,\{a_{k},b\}\rangle$ that is syntactically equivalent to $\mA$. Therefore, $\mA$ is completely reachable if and only if so is $\mA_{k}$ for some (and hence for all) $k$. Hence we may choose $k$ as we wish and study the DFA $\mA_{k}$ for the specified value of $k$ instead of $\mA$.

What can we achieve using this? From \eqref{eq:exclprod} we have $\excl(b^ka)=\excl(a)=0$. Further, let $q_1\ne q_2$ be such that $q_1\dt a=q_2\dt a=\dupl(a)$. Choosing $k=q_1$ (or $k=q_2$), we get $0\dt b^ka=\dupl(a)$. Thus, we will assume that $0\dt a=\dupl(a)$.

Summarizing, we will consider DFAs $\langle \Z_n,\{a,b\}\rangle$ such that:
\begin{itemize}
  \item the letter $a$ has defect 1, $\excl(a)=0$, and $0\dt a=\dupl(a)$;
  \item $q\dt b=q+1$ for each $q\in\Z_n$.
\end{itemize}
We call such DFAs \emph{standardized}. For the purpose of complexity considerations at the end of Sect.~\ref{sec:construction}, observe that given a binary DFA $\mA$ in which one letter acts as a cyclic permutation while the other has defect 1, one can `standardize' the automaton, that is, construct a standardized DFA syntactically equivalent to $\mA$, in linear time with respect to the size of $\mA$.

\section{A necessary condition}
\label{sec:necessary}

Let $\langle \Z_n,\{a,b\}\rangle$ be a standardized DFA and $w\in\{a,b\}^*$. A subset $S\subseteq\Z_n$ is said to be $w$-\emph{invariant} if $S\dt w\subseteq S$.

\begin{proposition}
\label{prop:invsubgroup}
If $\langle \Z_n,\{a,b\}\rangle$ is a completely reachable standardized DFA, then no proper subgroup of $(\Z_n,+)$ is $a$-invariant.
\end{proposition}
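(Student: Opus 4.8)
The plan is to prove the contrapositive: assuming that some proper subgroup $H$ of $(\Z_n,+)$ is $a$-invariant, I will exhibit a non-empty subset of $\Z_n$ that is not reachable. Since $b$ acts as the cycle $q\mapsto q+1$, formula \eqref{eq:exclprod} gives $\excl(wb)=\excl(w)+1$, so the collection of sets realizable as $\excl(w)$ is invariant under the rotation $+1$. The cosets $H,H+1,\dots,H+(d-1)$ form a single orbit of this rotation, so either all of them arise as excluded sets or none does. I will aim to show the latter; this immediately yields that $\Z_n{\setminus}H$ — the complement of the coset $H$, and a non-empty set because $H$ is proper — is never of the form $\Z_n\dt w$, hence not reachable, contradicting complete reachability.

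First I would pin down the shape that $a$ is forced to take on $H$. Because $\{0\}$ is never $a$-invariant (we have $0\dt a=\dupl(a)\ne\excl(a)=0$), the subgroup $H$ has at least two elements. Writing $\{0,c\}$ for the unique pair with $0\dt a=c\dt a=\dupl(a)$, I would argue first that $c\in H$: otherwise $a$ would restrict to an injective, hence bijective, self-map of $H$, forcing $0\in H\dt a\subseteq\Z_n\dt a$ and contradicting $\excl(a)=0$. A short counting argument — comparing $|H\dt a|$ and $|(\Z_n{\setminus}H)\dt a|$ against the global defect $1$ — then shows that $a$ maps $H$ onto $H{\setminus}\{0\}$ and maps $\Z_n{\setminus}H$ bijectively onto itself. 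In particular, $a$ respects the partition $\{H,\Z_n{\setminus}H\}$: no state outside $H$ has an $a$-preimage inside $H$, and no state inside $H$ has one outside.

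With this in hand, I would prove by induction on word length that $\excl(w)$ is never a coset of $H$. Suppose $w$ is a shortest word with $\excl(w)=H+j$ and write $w=w'x$. If $x=b$, then $\excl(w')=H+(j-1)$ is again a coset, contradicting minimality. If $x=a$, I would invoke \eqref{eq:exclprod} together with the structural facts: every word ending in $a$ excludes $0$ (its $a$-preimage is empty), so $0\in\excl(w)$ forces $j=0$; and when $j=0$ the bijectivity of $a$ on $\Z_n{\setminus}H$ forces $\excl(w')\subseteq H$, while the surjectivity of $a|_H$ onto $H{\setminus}\{0\}$ forces $H\subseteq\excl(w')$, whence $\excl(w')=H$ — again a coset, contradicting minimality. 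Since $\excl(\varepsilon)=\varnothing$ is not a coset, no such $w$ exists.

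The main obstacle is the structural lemma of the second step: the entire argument hinges on $a$ respecting the partition $\{H,\Z_n{\setminus}H\}$ and, more delicately, on $a|_H$ being a defect-$1$ self-map of $H$ with excluded state $0$. Once this self-similar behaviour of $a$ on $H$ is secured, the excluded-set induction is routine; the only genuinely case-dependent point is that a final letter $a$ can produce only the coset $H$ itself and never $H+j$ with $j\ne0$, precisely because applying $a$ always re-excludes the state $0\in H$.
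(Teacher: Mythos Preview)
Your argument is correct and follows the same architecture as the paper's proof: assume a proper $a$-invariant subgroup $H$, consider the cosets (equivalently their complements), take a shortest word whose excluded set is a coset, and derive a contradiction by case analysis on the last letter. The paper phrases everything in terms of images $T_i=\Z_n\setminus(H+i)$ rather than excluded sets, but that is purely dual.

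The one substantive difference is your intermediate structural lemma. You first prove that $a$ respects the partition $\{H,\Z_n\setminus H\}$, acting as a defect-$1$ map on $H$ and as a bijection on the complement, and then use both halves of this in the $x=a$ case. The paper bypasses this: from $\Z_n\dt w'a=T_0$ it argues only that $q\dt a\in T_0$ implies $q\in T_0$ (the contrapositive of $H\dt a\subseteq H$), giving $\Z_n\dt w'\subseteq T_0$, and then gets equality by the trivial size bound $|\Z_n\dt w'|\ge|\Z_n\dt w'a|$. So your structural lemma, while true and independently interesting, is more than the proof strictly needs; the paper's route is shorter because it uses only the given $a$-invariance of $H$ plus a cardinality comparison, never needing to know how $a$ behaves on $\Z_n\setminus H$.
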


\begin{proof}
Arguing by contradiction, assume that $H\subsetneqq\Z_n$ is a subgroup such that $H\dt a\subseteq H$. Let $d$ stand for the index of the subgroup $H$ in the group $(\Z_n,+)$. The set $\Z_n$ is then partitioned into the $d$ cosets
$$H_0=H,\ H_1=H\dt b=H+1,\ \dots,\ H_{d-1}=H\dt b^{d-1}=H+d-1.$$
For $i=0,1,\dots,d-1$, let $T_i$ be the complement of the coset $H_i$ in $\Z_n$. Then we have $T_i=\cup_{j\ne i}H_j$ and $T_i\dt b=T_{i+1\!\!\pmod{d}}$ for each $i=0,1,\dots,d-1$.

Since $\mA$ is completely reachable, each subset $T_i$ is reachable. Take a word $w$ of minimum length among words with the image equal to one of the subsets $T_0,T_1,\dots,T_{d-1}$. Write $w$ as $w=w'c$ for some letter $c\in\{a,b\}$.

If $c=b$, then for some $i\in\{0,1,\dots,d-1\}$, we have
$$\Z_n\dt w'b=T_i=T_{i-1\!\!\!\!\pmod{d}}\dt b.$$
Since $b^n$ acts as the identity mapping, applying the word $b^{n-1}$ to this equality yields $\Z_n\dt w'=T_{i-1\!\!\pmod{d}}$ whence the image of $w'$ is also equal to one of the subsets $T_0,T_1,\dots,T_{d-1}$. This contradicts the choice of $w$.

Thus, $c=a$, whence the set $\Z_n\dt w$  is contained in $\Z_n\dt a$. The only $T_i$ that is contained in $\Z_n\dt a$ is $T_0$ because each $T_i$ with $i\ne 0$ contains $H_0$, and $H_0=H$ contains 0, the excluded state of $a$. Hence, $\Z_n\dt w=T_0$, that is, $\Z_n\dt w'a=T_0$. For each state $q\in\Z_n\dt w'$, we have $q\dt a\in T_0$, and this implies $q\in T_0$ since $H_0$, the complement of $T_0$, is $a$-invariant. We see that $\Z_n\dt w'\subseteq T_0$ and the inclusion cannot be strict because $T_0$ cannot be the image of its proper subset. However, the equality $\Z_n\dt w'=T_0$ again contradicts the choice of $w$.
\end{proof}

We will show that the condition of Proposition~\ref{prop:invsubgroup} is not only necessary but also sufficient for complete reachability of a standardized DFA. The proof of sufficiency requires a construction that we present in full in Sect.~\ref{sec:construction}, after studying its simplest case in Sect.~\ref{sec:sufficient}.

\section{Rystsov's graph of a binary DFA}
\label{sec:sufficient}

Recall a sufficient condition for complete reachability from~\cite{BondarVolkov16}. Given a (not necessarily binary) DFA $\mathrsfs{A}=\langle Q,\Sigma\rangle$, let $W_1(\mathrsfs{A})$ stand for the set of all words in $\Sigma^*$ that have defect~1 in $\mathrsfs{A}$. Consider a digraph with the vertex set $Q$ and the edge set 
\[
E=\{(\excl(w),\dupl(w))\mid w\in W_1(\mathrsfs{A})\}.
\] 
We denote this digraph by $\Gamma_1(\mathrsfs{A})$. The notation comes from~\cite{BondarVolkov16}, but much earlier, though in a less explicit form, the construction was used by Rystsov~\cite{rystsov2000estimation} for some special species of DFAs. Taking this into account, we refer to $\Gamma_1(\mathrsfs{A})$ as the \emph{Rystsov graph} of $\mA$.

\begin{theorem}[\!\!{\mdseries\cite[Theorem 1]{BondarVolkov16}}]
\label{thm:sufficient}
If a DFA $\mathrsfs{A}=\langle Q,\Sigma\rangle$ is such that the graph $\Gamma_1(\mathrsfs{A})$ is \scn, then $\mathrsfs{A}$ is completely reachable.
\end{theorem}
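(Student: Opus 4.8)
The plan is to argue by contradiction, using a standard property of strongly connected digraphs: for every proper non-empty vertex subset $S\subseteq Q$ there is an edge of $\Gamma_1(\mA)$ \emph{entering} $S$, that is, an edge $(p,p')$ with $p\notin S$ and $p'\in S$ (take a vertex of $S$ and a vertex outside $S$, join them by a directed path lying inside the strongly connected $\Gamma_1(\mA)$, and look at the place where the path first crosses into $S$). Assume $\mA$ is not completely reachable and choose a non-reachable non-empty subset $S$ of maximum cardinality. Since $Q=Q\dt\varepsilon$ is reachable, $S$ is proper; and by the maximality of $|S|$, every subset of $Q$ of cardinality $|S|+1$ is reachable. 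I would then manufacture a reachable set $P$ of size $|S|+1$ together with a word sending $P$ onto $S$, which contradicts the non-reachability of $S$.

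To build $P$, I apply the property above to $S$, obtaining an edge $(p,p')$ of $\Gamma_1(\mA)$ with $p\notin S$ and $p'\in S$, and I fix a defect-$1$ word $w\in W_1(\mA)$ realizing it, so that $\excl(w)=p$ and $\dupl(w)=p'$. The key step is to take the full $w$-preimage $P=\{x\in Q\mid x\dt w\in S\}=\bigcup_{q\in S}qw^{-1}$. Because $w$ has defect $1$, its only non-singleton fibre is $(p')w^{-1}$, which has exactly two elements, while the only point missing from $\Img w$ is $p$. A short count over the fibres of $w$ then gives $|P|=|S|+1$, the two-element fibre contributing the single extra element precisely because $p'\in S$ while $p\notin S$. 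Moreover $P\dt w=S$: the inclusion $P\dt w\subseteq S$ is immediate, and every $q\in S$ lies in $\Img w=Q\setminus\{p\}$ and hence has a preimage inside $P$, so the image is all of $S$.

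Finally, since $|P|=|S|+1>|S|$, the set $P$ is reachable by the maximality of $S$; write $P=Q\dt u$. Then $Q\dt uw=P\dt w=S$, so $S$ is reachable after all --- the desired contradiction; one may also check directly via~\eqref{eq:exclprod} that $\excl(uw)=Q\setminus S$. I expect the main obstacle to be conceptual rather than computational. The natural temptation is to \emph{shrink} a reachable set by applying a defect-$1$ word, but a word of defect $1$ on $Q$ need not lower the cardinality of the particular subset it is applied to, and it is hard to control which state would be lost. The point is to run the argument \emph{backwards}, reconstructing $P$ as a preimage of the target $S$ and using incoming rather than outgoing edges of $\Gamma_1(\mA)$. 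Once this dual viewpoint is adopted, the only thing left to verify is the fibre count $|P|=|S|+1$, which follows at once from $p=\excl(w)\notin S$ and $p'=\dupl(w)\in S$.
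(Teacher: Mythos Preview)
Your argument is correct. The paper does not actually prove Theorem~\ref{thm:sufficient}; it merely quotes it from~\cite{BondarVolkov16}. However, your approach is essentially the one the paper uses in proving its generalization, Proposition~\ref{prop:sufficientnew}: given a proper non-empty $S$, use strong connectivity to locate an edge entering $S$, realize that edge by a word $w$, and pull $S$ back along $w$ to a strictly larger set $T$ with $T\dt w=S$. The only cosmetic difference is that you frame this as a contradiction from a maximum unreachable set, whereas the paper phrases it as a direct induction on $n-|S|$; the two are logically equivalent.
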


It was shown in~\cite{BondarVolkov16} that the condition of Theorem~\ref{thm:sufficient} is not necessary for complete reachability, but it was conjectured that the condition might characterize binary \cra. However, this conjecture has been refuted in~\cite[Example~2]{BCV:2022} by exhibiting a binary \cran\ with 12 states whose Rystsov graph is not \scn. Here we include a similar example which we will use to illustrate some of our results.

Consider the standardized DFA $\mE'_{12}=\langle \Z_{12},\{a,b\}\rangle$ where the action of the letter $a$ is specified as follows:
\begin{center}
\begin{tabular}{c@{\ \  }|@{\ \  }c@{\ \  }c@{\ \  }c@{\ \  }c@{\ \  }c@{\ \  }c@{\ \  }c@{\ \  }c@{\ \  }c@{\ \  }c@{\ \  }c@{\ \  }c}
$q$      & 0  & 1 & 2 & 3 & 4  & 5  & 6  & 7 & 8 & 9  & 10 & 11 \mathstrut\\
\hline
$q\dt a$ & 10 & 1 & 2 & 8 & 4  & 5  & 10 & 9 & 3 & 7  &  6 & 11 \mathstrut
\end{tabular}\,.
\end{center}
(The DFA $\mE'_{12}$ only slightly differs from the DFA $\mE_{12}$ used in~\cite[Example~2]{BCV:2022}, hence the notation.) The DFA  $\mathrsfs{E}'_{12}$ is shown in Fig.~\ref{fig:e12}, in which we have replaced edges that should have been labeled $a$ and $b$ with solid and, resp., dashed edges.
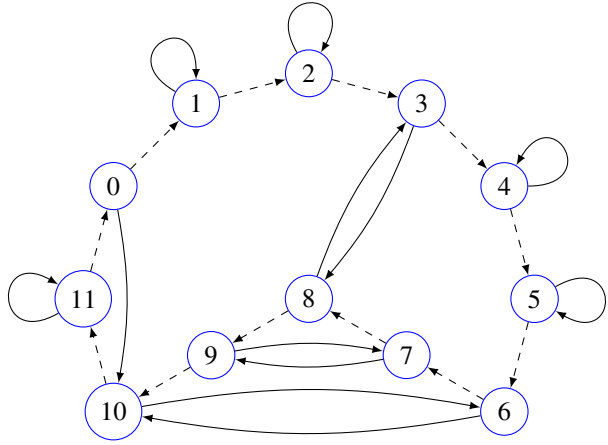
\begin{figure}[bht]
\begin{center}
\begin{tikzpicture}
 \node[fill=white, circle, draw=blue, scale=1] at (0:0cm) (8) {$8$};
 \node[fill=white, circle, draw=blue, scale=1] at (0:3cm) (5) {$5$};
 \node[fill=white, circle, draw=blue, scale=1] at (-30:3cm) (6) {$6$};
 \node[fill=white, circle, draw=blue, scale=1] at (30:3cm) (4) {$4$};
 \node[fill=white, circle, draw=blue, scale=1] at (60:3cm) (3) {$3$};
 \node[fill=white, circle, draw=blue, scale=1] at (90:3cm) (2) {$2$};
 \node[fill=white, circle, draw=blue, scale=1] at (120:3cm) (1) {$1$};
 \node[fill=white, circle, draw=blue, scale=1] at (150:3cm) (0) {$0$};
 \node[fill=white, circle, draw=blue, scale=1] at (180:3cm) (11) {$11$};
 \node[fill=white, circle, draw=blue, scale=1] at (210:3cm) (10) {$10$};
 \node[fill=white, circle, draw=blue, scale=1] at (-30:1.5cm) (7) {$7$};
 \node[fill=white, circle, draw=blue, scale=1] at (210:1.5cm) (9) {$9$};

 \draw
 (0) edge[-latex, dashed] (1)
 (1) edge[-latex, dashed] (2)
 (2) edge[-latex, dashed] (3)
 (3) edge[-latex, dashed] (4)
 (4) edge[-latex, dashed] (5)
 (5) edge[-latex, dashed] (6)
 (6) edge[-latex, dashed] (7)
 (7) edge[-latex, dashed] (8)
 (8) edge[-latex, dashed] (9)
 (9) edge[-latex, dashed] (10)
 (10) edge[-latex, dashed] (11)
 (11) edge[-latex, dashed] (0)

 (0) edge[-latex, bend left=10] (10)
 (8) edge[-latex, bend left=10] (3)
 (3) edge[-latex, bend left=10] (8)
 (9) edge[-latex, bend left=10] (7)
 (7) edge[-latex, bend left=10] (9)
 (10) edge[-latex, bend left=10] (6)
 (6) edge[-latex, bend left=10] (10)

 (11) edge[-latex, out = 210, in=150, distance = 1cm](11)
 (1) edge[-latex, out=150, in=90, distance=1cm ](1)
 (2) edge[-latex, out=120, in = 60, distance = 1cm ](2)
 (4) edge[-latex, loop right, out = 0, in=60, distance=1cm ](4)
 (5) edge[-latex, out = 30, in = -30, distance = 1cm ](5)

 ;
\end{tikzpicture}
\caption{The DFA $\mathrsfs{E}'_{12}$; solid and dashed edges show the action of $a$ and, resp., $b$}\label{fig:e12}
\end{center}
\end{figure}

We postpone the description of the digraph $\Gamma_1(\mathrsfs{E}'_{12})$ and the proof that the DFA $\mathrsfs{E}'_{12}$ is completely reachable until we develop suitable tools that make the description and the proof easy.

We start with a characterization of Rystsov's graphs of standardized DFAs. Let $\mA=\langle \Z_n,\{a,b\}\rangle$ be such a DFA. It readily follows from \eqref{eq:exclprod} and \eqref{eq:duplprod} that $\excl(w)\dt b=\excl(wb)$ and $\dupl(w)\dt b=\dupl(wb)$ for every word $w\in W_1(\mA)$. Therefore, the edge set $E$ of the digraph $\Gamma_1(\mA)$ is closed under the \emph{translation} $(q,p)\mapsto (q\dt b, p\dt b)=(q+1,p+1)$.
As a consequence, for any edge $(q,p)\in E$ and any $k$, the pair $(q+k,p+k)$ also constitutes an edge in $E$.

Denote by $D_1(\mA)$ the set of ends of edges of $\Gamma_1(\mA)$ that start at 0, that is, $D_1(\mA)=\{p\in\Z_n\mid (0,p)\in E\}$. We call $D_1(\mA)$ the \emph{difference set} of $\mA$. Our first observation shows how to recover all edges of $\Gamma_1(\mA)$, knowing $D_1(\mA)$.
\begin{lemma}
\label{lem:difference}
Let $\mA=\langle \Z_n,\{a,b\}\rangle$ be a standardized DFA. A pair $(q,p)\in\Z_n\times \Z_n$ forms an edge in the digraph $\Gamma_1(\mA)$ if and only if $p-q\in D_1(\mA)$.
\end{lemma}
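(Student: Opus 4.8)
The plan is to derive both implications directly from the translation-invariance of the edge set $E$ that was just recorded: for any edge $(q,p)\in E$ and any $k\in\Z_n$, the pair $(q+k,p+k)$ again lies in $E$. Since $D_1(\mA)$ is by definition the set of those $p$ for which $(0,p)\in E$, the whole statement reduces to choosing the translation parameter $k$ appropriately in each direction.

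For the forward direction, I would start from an edge $(q,p)\in E$ and apply the translation with $k=-q$. Invariance gives $(q-q,\,p-q)=(0,\,p-q)\in E$, which by the definition of the difference set means precisely $p-q\in D_1(\mA)$. For the converse, I would start from the assumption $p-q\in D_1(\mA)$, so that $(0,\,p-q)\in E$ by definition, and apply the translation with $k=q$; this yields $(0+q,\,(p-q)+q)=(q,p)\in E$, as required. Thus the two implications are exact mirror images of one another, the translations by $-q$ and $+q$ being mutually inverse.

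I do not expect any genuine obstacle: the entire content of the lemma is already packaged in the translation-invariance of $E$, and once that property is in hand the two implications are immediate specializations. The only point deserving a word of care is that all arithmetic is taken modulo $n$, so that translation by $-q$ and translation by $+q$ are well-defined inverse bijections of $\Z_n$; but this is precisely the standing convention that $+$ and $-$ denote addition and subtraction modulo $n$, so nothing further needs to be checked.
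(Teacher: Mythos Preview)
Your proposal is correct and matches the paper's proof essentially verbatim: both directions are obtained by translating an edge by $\pm q$ (the paper writes the $-q$ translation as $+(n-q)$, which is the same modulo~$n$). The only cosmetic difference is the order in which the two implications are presented.
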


\begin{proof}
If $p-q\in D_1(\mA)$, the pair $(0,p-q)$ is an edge in $E$, and therefore, so is the pair $(0+q,(p-q)+q)=(q,p)$. Conversely, if $(q,p)$ is an edge in $E$, then so is $(q+(n-q),p+(n-q))=(0,p-q)$, whence $p-q\in D_1(\mA)$.
\end{proof}

By Lemma~\ref{lem:difference}, the presence or absence of an edge in  $\Gamma_1(\mA)$ depends only on the difference modulo $n$ of two vertex numbers. This means that $\Gamma_1(\mA)$ is a \emph{circulant} digraph, that is, the Cayley digraph of the cyclic group $(\Z_n,+)$ with respect to some subset of $\Z_n$. Recall that if $D$ is a subset in a group $G$, the \emph{Cayley digraph of $G$ with respect to $D$}, denoted $\Cay(G,D)$, has $G$ as its vertex set and $\{(g,gd)\mid g\in G,\ d\in D\}$ as its edge set. The following property of Cayley digraphs of finite groups is folklore\footnote{In fact, our definition is the semigroup version of the notion of a Cayley digraph, but this makes no difference since in a finite group, every subsemigroup is a subgroup.}.
\begin{lemma}
\label{lem:cayley}
Let $G$ be a finite group, $D$ a subset of $G$, and $H$ the subgroup of $G$ generated by $D$. The \scc{}s\ of the Cayley digraph\/ $\Cay(G,D)$ have the right cosets $Hg$, $g\in G$, as their vertex sets, and each \scc\ is isomorphic to $\Cay(H,D)$. In particular, the digraph $\Cay(G,D)$ is \scn\ if and only if $G$ is generated by $D$.
\end{lemma}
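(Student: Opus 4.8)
The plan is to prove the statement about Cayley digraphs, namely Lemma~\ref{lem:cayley}, by first establishing that the strongly connected components are exactly the right cosets of $H=\langle D\rangle$, and then deriving the ``in particular'' claim as an immediate corollary.

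First I would show that two vertices $g,g'\in G$ lie in the same strongly connected component of $\Cay(G,D)$ if and only if $Hg=Hg'$. For the forward-reachability direction, observe that a directed edge runs $g\to gd$ for $d\in D$, so any directed path from $g$ produces a vertex of the form $g\,d_1d_2\cdots d_k$ with each $d_i\in D$; since $d_1\cdots d_k\in H$ (as $H$ is the subgroup generated by $D$), every vertex reachable from $g$ lies in $Hg$ — here I use that in a group the product of elements of $D$ lands in the \emph{subgroup} generated by $D$, and the footnote's remark that in a finite group a generated subsemigroup is already a subgroup lets me pass freely between ``products of elements of $D$'' and arbitrary elements of $H$. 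Conversely, I must show every element of $Hg$ is reachable from $g$: given $h\in H$, write $h$ as a product of elements of $D$ (again invoking the finiteness so that inverses are not needed), say $h=d_1\cdots d_k$, and then $g,\,gd_1,\,gd_1d_2,\dots,gd_1\cdots d_k=hg$ is a directed path, so $hg$ is reachable from $g$. Applying this with $g$ and $g'$ interchanged, and using that $Hg=Hg'$ is an equivalence relation, gives mutual reachability exactly within each coset; hence the vertex sets of the strongly connected components are precisely the right cosets $Hg$.

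Next I would verify that each component is isomorphic to $\Cay(H,D)$. The natural candidate map sends the coset $Hg$ onto itself as a subgraph and uses right multiplication: concretely, for a fixed coset representative $g$, the bijection $h\mapsto hg$ from $H$ to $Hg$ is a digraph isomorphism, since $h\to hd$ is an edge of $\Cay(H,D)$ iff $d\in D$ iff $(hg)\to(hd)g=(hg)d$ is an edge of the induced subgraph on $Hg$. Thus each component, with its induced edges, is isomorphic to $\Cay(H,D)$ via right translation.

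Finally, the ``in particular'' statement follows at once: $\Cay(G,D)$ is strongly connected iff it has a single component iff there is a single right coset $Hg$, which happens iff $H=G$, i.e. iff $D$ generates $G$. I expect the only real subtlety to be the clean handling of the empty-product/identity and the semigroup-versus-group issue flagged in the footnote: I want reachability from $g$ to capture \emph{all} of $Hg$ and not merely the positive-product submonoid, and the finiteness of $G$ is exactly what guarantees these coincide, so I would state that reduction explicitly rather than leave it implicit.
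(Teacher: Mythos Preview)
The paper does not actually prove this lemma; it is stated as folklore, with only the footnote remark that in a finite group every subsemigroup is already a subgroup. So there is nothing to compare against, and the outline you give is the standard argument.

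There is, however, a small but genuine slip in your write-up. With the edge convention $(g,gd)$ used in the paper, the vertices reachable from $g$ by directed paths are exactly $\{g\,d_1\cdots d_k : d_i\in D\}=gH$, the \emph{left} coset of $H$, not the right coset $Hg$. You write $g d_1\cdots d_k = hg$ and, in the isomorphism step, $(hd)g=(hg)d$; both equalities silently assume that $g$ commutes with elements of $H$, i.e., that $G$ is abelian. The correct version (for this edge convention) is that the strongly connected components are the left cosets $gH$, and the isomorphism with $\Cay(H,D)$ is $h\mapsto gh$, under which the edge $h\to hd$ corresponds to $gh\to g(hd)=(gh)d$. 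The lemma as printed carries the same left/right confusion, which is harmless in the paper because it is only ever applied to the abelian group $(\Z_n,+)$; but your argument, as written, does not establish the statement for a general finite group. Replacing $Hg$ by $gH$ and the map $h\mapsto hg$ by $h\mapsto gh$ repairs everything.
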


Let $H_1(\mA)$ stand for the subgroup of the group $(\Z_n,+)$ generated by the difference set $D_1(\mA)$. Specializing Lemma~\ref{lem:cayley}, we get the following description for Rystsov's graphs of standardized DFAs.
\begin{proposition}
\label{prop:cayley}
Let $\mA=\langle \Z_n,\{a,b\}\rangle$ be a standardized DFA. The digraph $\Gamma_1(\mA)$ is isomorphic to the Cayley digraph $\Cay(\Z_n,D_1(\mA))$.  The \scc{}s\ of $\Gamma_1(\mA)$ have the cosets of the subgroup $H_1(\mA)$ as their vertex sets, and each \scc\ is isomorphic to the Cayley digraph $\Cay(H_1(\mA),D_1(\mA))$. In particular, the digraph $\Gamma_1(\mA)$ is \scn\ if and only if the set $D_1(\mA)$ generates $(\Z_n,+)$ or, equivalently, if and only if the greatest common divisor of $D_1(\mA)$ is coprime to $n$.
\end{proposition}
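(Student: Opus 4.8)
The plan is to obtain Proposition~\ref{prop:cayley} as an essentially direct specialization of Lemma~\ref{lem:cayley}, using the circulant structure of $\Gamma_1(\mA)$ established in Lemma~\ref{lem:difference}. First I would observe that Lemma~\ref{lem:difference} says precisely that $(q,p)$ is an edge of $\Gamma_1(\mA)$ exactly when $p = q + d$ for some $d \in D_1(\mA)$; rewriting the group operation additively, this is literally the condition $p \in \{q + d \mid d \in D_1(\mA)\}$, which is the defining edge relation of the Cayley digraph $\Cay(\Z_n, D_1(\mA))$. Hence $\Gamma_1(\mA)$ and $\Cay(\Z_n, D_1(\mA))$ have the same vertex set $\Z_n$ and literally the same edge set, so they are equal (and in particular isomorphic). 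This step is the crux of the reduction, and it is short once Lemma~\ref{lem:difference} is in hand.

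Next I would apply Lemma~\ref{lem:cayley} with $G = (\Z_n,+)$ and $D = D_1(\mA)$. By definition $H = H_1(\mA)$ is the subgroup generated by $D_1(\mA)$, so Lemma~\ref{lem:cayley} immediately gives that the strongly connected components of $\Cay(\Z_n, D_1(\mA))$ have the cosets of $H_1(\mA)$ as their vertex sets, with each component isomorphic to $\Cay(H_1(\mA), D_1(\mA))$. Transporting this back along the identification from the previous step yields the corresponding statement for $\Gamma_1(\mA)$.

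For the final ``in particular'' clause I would again quote Lemma~\ref{lem:cayley}: $\Cay(\Z_n, D_1(\mA))$ is strongly connected if and only if $D_1(\mA)$ generates $(\Z_n,+)$. The only remaining point is the arithmetic reformulation: a subset $D \subseteq \Z_n$ generates the whole cyclic group $(\Z_n,+)$ if and only if $\gcd$ of the elements of $D$ together with $n$ equals $1$, i.e.\ the greatest common divisor of $D_1(\mA)$ is coprime to $n$. This is the standard fact that the subgroup of $\Z_n$ generated by a set equals the subgroup generated by the gcd of that set (a single element), and that $\langle d \rangle = \Z_n$ precisely when $\gcd(d,n)=1$.

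I do not anticipate a genuine obstacle here, since every ingredient has been prepared in advance; the proof is a matter of correctly matching definitions. The one place demanding care is the first step, making sure the edge relation of $\Gamma_1(\mA)$ as characterized in Lemma~\ref{lem:difference} is aligned with the Cayley convention $\{(g, g+d) \mid g \in \Z_n,\ d \in D_1(\mA)\}$ rather than its reverse, so that the claimed isomorphism is the identity map and the coset decomposition comes out on the correct side. Beyond that, the elementary number-theoretic equivalence for the gcd condition should simply be stated, as it is standard.
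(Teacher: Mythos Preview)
Your proposal is correct and matches the paper's approach exactly: the paper does not even give a formal proof of Proposition~\ref{prop:cayley}, merely stating that it follows by ``specializing Lemma~\ref{lem:cayley}'' after the circulant structure from Lemma~\ref{lem:difference} has been established. Your write-up simply fills in the details of that specialization, including the standard gcd reformulation, and there is nothing to correct.
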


Proposition~\ref{prop:cayley} shows that structure of the Rystsov graph of a standardized DFA $\mA$ crucially depends on its difference set $D_1(\mA)$. The definition of the edge set of $\Gamma_1(\mA)$ describes $D_1(\mA)$ as the set of duplicate states for all words $w$ of defect 1 whose excluded state is 0, that is, $D_1(\mA)=\{\dupl(w)\mid \excl(w)=0\}$. Thus, understanding of difference sets amounts to a classification of transformations caused by words of defect 1. It is such a classification that is behind the following handy description of difference sets.

\begin{proposition}
\label{prop:d0}
Let $\mA=\langle \Z_n,\{a,b\}\rangle$ be a standardized DFA. Let $r\ne 0$ be such that $r\dt a=\dupl(a)$. Then
\begin{equation}
\label{eq:d0}
D_1(\mA)=\{\dupl(a)\dt v \mid v\in\{a,b^ra\}^*\}.
\end{equation}
\end{proposition}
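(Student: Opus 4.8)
The plan is to prove the two inclusions between $D_1(\mA)$ and $S:=\{\dupl(a)\dt v\mid v\in\{a,b^ra\}^*\}$ separately, both resting on one local computation describing how the pair $(\excl,\dupl)$ changes when the letter $a$ is appended to a word of defect~$1$. Two features of $a$ are decisive, and I would record them first: since $\excl(a)=0$, the state $0$ has no $a$-preimage, so $0a^{-1}=\varnothing$; and since $a$ has defect~$1$ with $0\dt a=r\dt a=\dupl(a)$, the only state with several $a$-preimages is $\dupl(a)$, whose preimage set is exactly $\{0,r\}$. For a word $u$ of defect~$1$ we have $Q\dt u=\Z_n\setminus\{\excl(u)\}$, so appending $a$ merges a pair---and hence keeps the defect equal to $1$ rather than raising it to $2$---precisely when $\{0,r\}\not\subseteq Q\dt u$, that is, when $\excl(u)\in\{0,r\}$. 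In that case I would read off from \eqref{eq:exclprod} that $\excl(ua)=0$ and from \eqref{eq:duplprod} that $\dupl(ua)=\dupl(u)\dt a$, the second clause $|qa^{-1}\setminus\excl(u)|\ge2$ of \eqref{eq:duplprod} being inactive because $\{0,r\}\setminus\excl(u)$ then has a single element.

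For the inclusion $S\subseteq D_1(\mA)$ I would show by induction on the length of $v\in\{a,b^ra\}^*$ that $av$ has defect~$1$, $\excl(av)=0$, and $\dupl(av)=\dupl(a)\dt v$; this places each $\dupl(a)\dt v$ in $D_1(\mA)$. The base case $v=\varepsilon$ is immediate from $\excl(a)=0$ and $\dupl(a)=\dupl(a)$. For the step, write $u=av'$ with $\excl(u)=0$. Appending $a$ keeps the defect at $1$ (as $\excl(u)=0\in\{0,r\}$) and sends $\dupl(u)$ to $\dupl(u)\dt a$ by the local computation. Appending $b^ra$ I would handle in two moves: the $r$ applications of $b$ shift the excluded state to $r$ and the duplicate state to $\dupl(u)+r$, after which the final $a$ (now with excluded state $r\in\{0,r\}$) again preserves defect~$1$, returns the excluded state to $0$, and yields the new duplicate state $(\dupl(u)+r)\dt a=\dupl(u)\dt b^ra$. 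In either case the duplicate state is transformed by the appended letter, which closes the induction.

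For the reverse inclusion $D_1(\mA)\subseteq S$ I would start from an arbitrary word $w$ of defect~$1$ with $\excl(w)=0$ and write it as $w=b^{i_0}ab^{i_1}a\cdots ab^{i_m}$ with $m\ge1$ (there is at least one $a$ because the defect is positive). As $b$ permutes $\Z_n$, the image just before the first $a$ is all of $\Z_n$, so this $a$ merges $0$ and $r$; a direct computation gives $\excl(b^{i_0}a)=0$ and $\dupl(b^{i_0}a)=\dupl(a)$ irrespective of $i_0$. Because the defect is non-decreasing along prefixes and stays $1$ throughout $w$, every subsequent $a$ must preserve defect~$1$, which by the local computation forces the excluded state immediately before it to lie in $\{0,r\}$ and resets that state to $0$ afterwards. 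Propagating the excluded state through the blocks $b^{i_1},\dots,b^{i_{m-1}}$ thus gives $i_j\equiv 0$ or $i_j\equiv r\pmod n$ for $1\le j\le m-1$, while $\excl(w)=0$ forces $i_m\equiv0\pmod n$. Hence each factor $b^{i_j}a$ with $1\le j\le m-1$ acts as $a$ or as $b^ra$, and $b^{i_m}$ acts as the identity, so the part of $w$ after the first $a$ acts on the duplicate state exactly as some word $v\in\{a,b^ra\}^*$; combined with $\dupl(b^{i_0}a)=\dupl(a)$ and the fact that $\dupl(uc)=\dupl(u)\dt c$ for $c\in\{a,b^ra\}$ whenever $\excl(u)=0$, this yields $\dupl(w)=\dupl(a)\dt v\in S$.

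The hard part will be the local computation of the first paragraph: pinning down exactly when appending $a$ keeps the defect equal to $1$ and what it then does to $\excl$ and $\dupl$. Once the dichotomy $\excl(u)\in\{0,r\}$ versus $\excl(u)\notin\{0,r\}$ is established and the doubling clause of \eqref{eq:duplprod} is checked to be inactive in the former case, both inclusions reduce to the routine left-to-right reading and induction sketched above.
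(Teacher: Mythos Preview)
Your proof is correct and follows essentially the same line as the paper's. The only notable packaging difference is that, for the inclusion $S\subseteq D_1(\mA)$, the paper compresses your local computation into the single observation that both $a$ and $b^ra$ act as permutations on $N=\Z_n\setminus\{0\}$, from which defect~$1$, $\excl(av)=0$, and $\dupl(av)=\dupl(a)\dt v$ follow at once; for the reverse inclusion the paper peels off the last block $b^ka$ by induction on the number of $a$'s, which is the same argument as your left-to-right reading of $w=b^{i_0}ab^{i_1}\cdots ab^{i_m}$.
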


\begin{proof}
Denote by $N$ the image of the letter $a$, that is, $N=\Z_n{\setminus}\{0\}$. If $q\dt a=p$ for some $q\in\Z_n$ and $p\in N$, then, clearly, $(q-r)\dt b^ra=p$. Hence the only state in $N$ that has a preimage of size 2 under the actions of both $a$ and $b^ra$ is
\[
\dupl(a)=\begin{cases}
 0\dt a=r\dt a,\\
 (n-r)\dt b^ra=0\dt b^ra,
 \end{cases}
\]
and in both cases 0 belongs to the preimage. Thus, the preimage of every $p\in N$ under both $a$ and $b^ra$ contains a unique state in $N$, which means that both $a$ and $b^ra$ act on the set $N$ as permutations. Hence every word $v\in\{a,b^ra\}^*$ acts on $N$ as a permutation. Then the word $av$ has defect 1 and $\excl(av)=0$. Applying the equality \eqref{eq:duplprod} with $a$ in the role of $u$, we derive that $\dupl(av)=\dupl(a)\dt v$. Thus, denoting the right-hand side of \eqref{eq:d0} by $D$, we see that every state in $D$ is the duplicate state of some word whose only excluded state is 0. This means that $D_1(\mA)\supseteq D$.

To verify the converse inclusion, take an arbitrary state $p\in D_1(\mA)$ and let $w$ be a word of defect 1 such that $\excl(w)=0$ and $\dupl(w)=p$. Since $\excl(w)=0$, the word $w$ ends with the letter $a$. We prove that $p$ lies in $D$ by induction on the number of occurrences of $a$ in $w$. If $a$ occurs in $w$ once, then $w=b^ka$ for some $k\in\Z_n$. We have $p=\dupl(w)=\dupl(b^ka)=\dupl(a)\in D$.

If $a$ occurs in $w$ at least twice, write $w=w'b^ka$ where $w'$ ends with $a$. Then the word $w'$ has defect 1 and $\excl(w')=0$. As $w'$ has fewer occurrences of $a$, the inductive assumption applies and yields $\dupl(w')\in D$. Denoting $\dupl(w')$ by $p'$, we have $p=p'\dt b^ka$. If we prove that $k\in\{0,r\}$, we are done since the set $D$ is both $a$-invariant and $b^ra$-invariant by its definition. Arguing by contradiction, assume $k\notin\{0,r\}$. Let $\ell=k\dt a$; then $k$ is the only state in $\ell a^{-1}$. Hence $\ell a^{-1}=\excl(w'b^{k})$, and the equality \eqref{eq:exclprod} (with $u=w'b^{k}$ and $v=a$) shows that $\ell\in\excl(w'b^{k}a)=\excl(w)$. Clearly, $\ell\ne 0$ as $\ell$ lies in the image of $a$. Therefore the conclusion $\ell\in\excl(w)$ contradicts the assumption $\excl(w)=0$.
\end{proof}

For an illustration, we apply \eqref{eq:d0} to compute the difference set for the DFA $\mathrsfs{E}'_{12}$ shown in Fig.~\ref{fig:e12}. In $\mathrsfs{E}'_{12}$, we have $r=6$ and $\dupl(a)=10$. Acting by $a$ and $b^6a$ gives $10\dt a=6$ and $10\dt b^6a=(10+6)\dt a=4\dt a=4$. Thus, $4,6\in D_1(\mathrsfs{E}'_{12})$. Acting by $a$ or $b^6a$ at 4 and 6 does not produce anything new: $4\dt a=4$ and $4\dt b^6a=(4+6)\dt a=10\dt a=6$ while $6\dt a=10$ and $6\dt b^6a=(6+6)\dt a=0\dt a=10$. We conclude that $D_1(\mathrsfs{E}'_{12})=\{4,6,10\}$. Since 2, the greatest common divisor of $\{4,6,10\}$, divides 12, we see that the digraph $\Gamma_1(\mathrsfs{E}'_{12})$ is not \scn. The subgroup $H_1(\mathrsfs{E}'_{12})$ consists of even residues modulo 12 and has index 2. Hence the digraph $\Gamma_1(\mathrsfs{E}'_{12})$ has two \scc{}s whose vertex sets are $\{0,2,4,6,8,10\}$ and $\{1,3,5,7,9,11\}$,
and for each $q\in\Z_{12}$, it has the edges $(q,q+4)$, $(q,q+6)$, and $(q,q+10)$.

\medskip

In fact, formula \eqref{eq:d0} leads to a straightforward algorithm that computes the difference set of any standardized DFA $\mA$ in time linear in $n$. This, together with Proposition~\ref{prop:cayley}, gives an efficient way to compute the Rystsov graph of $\mA$.

\medskip

Let $D_1^0(\mA)=D_1(\mA)\cup\{0\}$. It turns out that $D_1^0(\mA)$ is always a union of cosets of a nontrivial subgroup.
\begin{proposition}
\label{prop:d0coset}
Let $\mA=\langle \Z_n,\{a,b\}\rangle$ be a standardized DFA. Let $r\ne 0$ be such that $r\dt a=\dupl(a)$. Then the set $D_1^0(\mA)$ is a union of cosets of the subgroup generated by $r$ in the group $H_1(\mA)$.
\end{proposition}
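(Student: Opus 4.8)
The plan is to prove the equivalent statement that $D_1^0(\mA)$ is closed under translation by $r$, that is, $D_1^0(\mA)+r=D_1^0(\mA)$; since iterating $x\mapsto x+r$ produces exactly the translations by the subgroup $\langle r\rangle$, such closure is the same as being a union of cosets of $\langle r\rangle$. I would work in the setting of Proposition~\ref{prop:d0}: writing $N=\Z_n\setminus\{0\}$, both $a$ and $b^ra$ act on $N$ as permutations, which I denote $\alpha$ and $\beta$, and $D_1(\mA)$ is the orbit of $\dupl(a)$ under the group $\langle\alpha,\beta\rangle$ they generate (the monoid in Proposition~\ref{prop:d0} is a group because $N$ is finite). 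In particular $D_1(\mA)$ is invariant under every element of $\langle\alpha,\beta\rangle$ and under their inverses.

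The key device is the permutation $\sigma=\alpha^{-1}\beta\in\langle\alpha,\beta\rangle$, which I claim realizes translation by $r$ on $N$ away from a single point. Indeed, for $p\in N$ with $p\ne -r$ we have $p+r\in N$ and $\beta(p)=(p+r)\dt a=\alpha(p+r)$, so $\sigma(p)=p+r$; while at the exceptional point $\beta(-r)=0\dt a=\dupl(a)$, which gives both $\sigma(-r)=\alpha^{-1}(\dupl(a))=r$ and, crucially, $-r=\beta^{-1}(\dupl(a))\in D_1(\mA)$, since $\dupl(a)\in D_1(\mA)$ and $\beta^{-1}$ preserves the orbit.

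With these facts I would verify $D_1^0(\mA)+r\subseteq D_1^0(\mA)$ in three cases exhausting $D_1^0(\mA)=\{0\}\cup D_1(\mA)$: the element $0$ goes to $r=\sigma(-r)\in D_1(\mA)$ (using $-r\in D_1(\mA)$ and $\sigma$-invariance); an element $p\in D_1(\mA)$ with $p\ne -r$ goes to $\sigma(p)\in D_1(\mA)$; and $-r\in D_1(\mA)$ goes to $0\in D_1^0(\mA)$. Since $x\mapsto x+r$ is a bijection of the finite set $\Z_n$ and $D_1^0(\mA)$ is finite, this inclusion forces $D_1^0(\mA)+r=D_1^0(\mA)$, so $D_1^0(\mA)$ is a union of cosets of $\langle r\rangle$. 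Finally, $-r\in D_1(\mA)\subseteq H_1(\mA)$ yields $r\in H_1(\mA)$, so $\langle r\rangle$ is a subgroup of $H_1(\mA)$, and as $D_1^0(\mA)\subseteq H_1(\mA)$ the cosets in question lie inside $H_1(\mA)$, matching the statement.

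The main obstacle is the asymmetry at $0$: because $0\notin N$, translation by $r$ is \emph{not} literally induced by a group element on all of $\Z_n$, and $\sigma$ deviates from $+r$ exactly at $-r$, sending it to $r$ rather than to $0$. Without further input this would leave open the possibility that $D_1^0(\mA)$ contains $0$ yet misses the remaining elements of the coset $\langle r\rangle$. The observation that rescues the argument, and which I expect to be the heart of the proof, is $-r=\beta^{-1}(\dupl(a))\in D_1(\mA)$: it deposits a nonzero multiple of $r$ into $D_1(\mA)$, so that adjoining $0$ completes the coset $\langle r\rangle$ instead of stranding $0$.
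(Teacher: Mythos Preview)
Your proof is correct and follows essentially the same route as the paper: both reduce the claim to the implication $d\in D_1^0(\mA)\Rightarrow d+r\in D_1^0(\mA)$, both exploit that $a$ and $b^ra$ permute $N=\Z_n\setminus\{0\}$, and both hinge on the identity $(d+r)\dt a=d\dt b^ra$ for $d,d+r\in N$, which in your notation reads $\alpha(d+r)=\beta(d)$. The paper argues by first landing $(d+r)\dt a$ in $D_1(\mA)$ and then undoing $a$ via a power $a^{k-1}$, whereas you package the same computation as $d+r=\sigma(d)$ with $\sigma=\alpha^{-1}\beta$ and handle the boundary by the neat observation $-r=\beta^{-1}(\dupl(a))\in D_1(\mA)$; this is a tidy group-theoretic reformulation of the same idea rather than a different argument.
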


\begin{proof}
It is easy to see that the claim is equivalent to the following implication: if $d\in D_1^0(\mA)$, then $d+r\in D_1^0(\mA)$. This clearly holds if $d+r=0$. Thus, assume that $d\in D_1^0(\mA)$ is such that $d+r\ne0$. Then $(d+r)\dt a\in D_1(\mA)$. Indeed, if $d=0$, then $(d+r)\dt a=r\dt a=\dupl(a)\in D_1(\mA)$. If $d\ne 0$, then $d\in D_1(\mA)$, whence $(d+r)\dt a=d\dt b^ra\in D_1(\mA)$ as formula \eqref{eq:d0} ensures that the set $D_1(\mA)$ is closed under the action of the word $b^ra$.

We have observed in the first paragraph of the proof of Proposition~\ref{prop:d0} that $a$ acts on the set $N=\Z_n{\setminus}\{0\}$ as a permutation. Hence for some $k$, the word $a^k$ acts on $N$ as the identity map. Then
$d+r=(d+r)\dt a^k=((d+r)\dt a)\dt a^{k-1}\in D_1(\mA)$ since we have already shown that $(d+r)\dt a\in D_1(\mA)$ and formula \eqref{eq:d0} ensures that the set $D_1(\mA)$ is $a$-invariant.
\end{proof}

In our running example $\mathrsfs{E}'_{12}$, $r=6$ and the set $D_1^0(\mathrsfs{E}'_{12})=\{0,4,6,10\}$ is the union of the subgroup $\{0,6\}$ with its coset $\{4,10\}$ in the group $H_1(\mathrsfs{E}'_{12})$.

\medskip

Let $\mA=\langle \Z_n,\{a,b\}\rangle$ be a standardized DFA. Proposition~\ref{prop:d0coset} shows that then the set $D_1^0(\mA)$ is situated between the subgroup $H_1(\mA)$ and the subgroup $R$ generated by $r\ne 0$ such that $r\dt a=\dupl(a)$:
\begin{equation}
\label{eq:inclusions}
R\subseteq D_1^0(\mA)\subseteq H_1(\mA).
\end{equation}
Formula \eqref{eq:d0} implies that the difference set $D_1(\mA)$ is $a$-invariant, and so is the set $D_1^0(\mA)$ since $0\dt a=\dupl(a)\in D_1(\mA)$. By Proposition~\ref{prop:invsubgroup}, if the automaton $\mA$ is completely reachable, then either $H_1(\mA)=\Z_n$ or $H_1(\mA)$ is a proper subgroup and both inclusions in \eqref{eq:inclusions} are strict. Recall that by Proposition~\ref{prop:cayley} $H_1(\mA)=\Z_n$ if and only if the digraph $\Gamma_1(\mA)$ is \scn. In the other case, $n$ must be a product of at least three (not necessarily distinct) prime numbers. Indeed, the subgroups of $(\Z_n,+)$ ordered by inclusion are in a 1-1 correspondence to the divisors of $n$ ordered by division, and no product of only two primes can have two different proper divisors $d_1$ and $d_2$ such that $d_1$ divides $d_2$. We thus arrive at the following conclusion.

\begin{corollary}
\label{cor:2primes}
A binary DFA $\mA$ with $n$ states where $n$ is a product of two prime numbers is completely reachable if and only if one of its letters acts as a cyclic permutation of the state set, the other letter has defect $1$, and the digraph $\Gamma_1(\mA)$ is \scn.
\end{corollary}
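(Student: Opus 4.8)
The plan is to recognise that this corollary is the promised specialisation of the analysis carried out in the paragraphs immediately preceding it, so the proof amounts to assembling those facts once $n$ is restricted to have exactly two prime factors. I would split the biconditional into its two implications. Sufficiency needs no new work: Theorem~\ref{thm:sufficient} holds for arbitrary DFAs, so if $\Gamma_1(\mA)$ is \scn\ then $\mA$ is completely reachable; the conditions on the two letters are superfluous for this direction.

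For necessity, suppose $\mA$ is completely reachable. First I would pin down the structural form of $\mA$ via the reductions of Section~\ref{sec:background}. Since $n$ is a product of two primes we have $n\ge 4$, so $\mA$ is not the flip-flop; the reductions show that $\mA$ must possess a letter of defect~$1$, and that if the second letter also had defect $\ge 1$ then at most two subsets of size $n-1$ could be reachable, forcing $n=2$, which is excluded. Hence the second letter has defect~$0$, i.e.\ acts as a permutation, and Lemma~\ref{lem:cyclic} upgrades it to a cyclic permutation. This establishes the first two conditions. Replacing $\mA$ by a syntactically equivalent standardized DFA preserves complete reachability and changes $\Gamma_1(\mA)$ only up to isomorphism, hence preserves the property of being \scn, so I may assume $\mA=\langle\Z_n,\{a,b\}\rangle$ is standardized.

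It remains to show $\Gamma_1(\mA)$ is \scn, which I would prove by contraposition using the chain \eqref{eq:inclusions}. If $\Gamma_1(\mA)$ were not \scn, then by Proposition~\ref{prop:cayley} the subgroup $H_1(\mA)$ is proper. Because $D_1^0(\mA)$ is $a$-invariant, Proposition~\ref{prop:invsubgroup} prevents it from equalling either the subgroup $R$ or the subgroup $H_1(\mA)$ (either coincidence would make $D_1^0(\mA)$ an $a$-invariant proper subgroup), so both inclusions in \eqref{eq:inclusions} are strict; in particular $R\subsetneq H_1(\mA)\subsetneq\Z_n$ is a chain of three distinct subgroups. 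Writing $H_1(\mA)=\langle d_H\rangle$ and $R=\langle d_R\rangle$ with $d_H,d_R$ dividing $n$, this chain yields $1<d_H<d_R<n$ with $d_H\mid d_R\mid n$, whence $n=(n/d_R)(d_R/d_H)d_H$ is a product of three integers each exceeding $1$, so $n$ has at least three prime factors — contradicting that $n$ is a product of only two primes. Therefore $H_1(\mA)=\Z_n$, and $\Gamma_1(\mA)$ is \scn\ by Proposition~\ref{prop:cayley}.

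The only delicate point is the strictness of the two inclusions in \eqref{eq:inclusions}; this is exactly the combination of the $a$-invariance of $D_1^0(\mA)$ with Proposition~\ref{prop:invsubgroup} recorded just before the corollary, and everything else is routine bookkeeping with the subgroup--divisor correspondence in $(\Z_n,+)$.
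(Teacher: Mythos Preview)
Your argument is correct and follows essentially the same route as the paper: sufficiency via Theorem~\ref{thm:sufficient}, necessity via the reductions of Section~\ref{sec:background} plus Lemma~\ref{lem:cyclic}, and then the contrapositive using the chain \eqref{eq:inclusions} together with the $a$-invariance of $D_1^0(\mA)$ and Proposition~\ref{prop:invsubgroup} to force three prime factors. The only point you might make explicit is that $R$ is nontrivial (since $r\ne 0$), which is what guarantees $d_R<n$ in your divisor chain; the paper leaves this equally implicit.
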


Corollary~\ref{cor:2primes} allows one to show that the number of states in a binary \cra\ whose Rystsov graph is not \scn\ is at least 12. (Thus, our examples of such automata ($\mE_{12}$ from \cite[Example~2]{BCV:2022} and $\mE'_{12}$ from the present paper) are of minimum possible size.) Indeed, Corollary~\ref{cor:2primes} excludes all sizes less than 12 except 8. If a standardized DFA $\mA$ has 8 states and the digraph $\Gamma_1(\mA)$ is not \scn, then the group $H_1(\mA)$ has size at most 4 and its subgroup $R$ generated by the non-zero state in $\dupl(a)a^{-1}$ has size at least 2. By Proposition~\ref{prop:d0coset} the set $D_1^0(\mA)$ is a union of cosets of the subgroup $R$ in the group $H_1(\mA)$, whence either $D_0(\mA)=R$ or $D_0(\mA)=H_1(\mA)$. In either case, we get a proper $a$-invariant subgroup, and Proposition~\ref{prop:invsubgroup} implies that the DFA $\mA$ is not completely reachable.

\section{Subgroup sequences for standardized DFAs}
\label{sec:construction}

In \cite{BondarVolkov18,BCV:2022} Theorem~\ref{thm:sufficient} is generalized in the following way. A sequence of digraphs $\Gamma_1(\mA)$, $\Gamma_2(\mA)$, \dots, $\Gamma_k(\mA)$, \dots\  is assigned to an arbitrary (not necessarily binary) DFA $\mathrsfs{A}$, where $\Gamma_1(\mA)$ is the Rystsov graph of $\mA$ while the `higher level' digraphs $\Gamma_2(\mA)$, \dots, $\Gamma_k(\mA)$, \dots\ are defined via words that have defect~2, \dots, $k$, \dots\ in $\mathrsfs{A}$. (We refer the interested reader to \cite{BondarVolkov18,BCV:2022} for the precise definitions; here we do not need them.) The length of the sequence is less than the number of states of $\mA$, and $\mA$ is completely reachable if and only if the final  digraph in the sequence is strongly connected.

For the case when $\mA$ is a standardized DFA, Proposition~\ref{prop:cayley} shows that the Rystsov graph $\Gamma_1(\mA)$ is completely determined by the difference set $D_1(\mA)$ and the subgroup $H_1(\mA)$ that $D_1(\mA)$  generates. This suggests that for binary automata, one may substitute the `higher level' digraphs of \cite{BondarVolkov18,BCV:2022} by suitably chosen `higher level' difference sets and their generated subgroups.

Take a standardized DFA $\mA=\langle \Z_n,\{a,b\}\rangle$ and for each $k>1$, inductively define the set $D_k(\mA)$ and the subgroup $H_k(\mA)$:
\begin{align}
\label{eq:defk}
D_k(\mA)&=\{p\in\Z_n\mid p\in\dupl(w) \text{ for some } w\in\{a,b\}^*\notag\\
        &\text{ such that } 0\in\excl(w)\subseteq H_{k-1}(\mA),\ |\excl(w)|\le k\},\\
H_k(\mA)&\text{ is the subgroup of $(\Z_n,+)$ generated by } D_k(\mA).\notag
\end{align}
Observe that if we let $H_0(\mA)=\{0\}$, the definition \eqref{eq:defk} makes sense also for $k=1$ and leads to exactly the same $D_1(\mA)$ and $H_1(\mA)$ as defined in Sect.~\ref{sec:sufficient}.

Using the definition \eqref{eq:defk}, it is easy to prove by induction that $D_k(\mA)\subseteq D_{k+1}(\mA)$ and $H_k(\mA)\subseteq H_{k+1}(\mA)$ for all $k$.

\begin{proposition}
\label{prop:sufficientnew}
If $\mA=\langle \Z_n,\{a,b\}\rangle$ is a standardized DFA and $H_\ell(\mA)=\Z_n$ for some $\ell$, then $\mA$ is a \cran.
\end{proposition}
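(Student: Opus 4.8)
The plan is to deduce complete reachability from the hypothesis $H_\ell(\mA)=\Z_n$ by an induction on the level $k$, running from $k=0$ up to $k=\ell$. For each $k$ I would prove the statement $(\star_k)$: \emph{every subset $T\subseteq\Z_n$ whose complement $\Z_n\setminus T$ is a non-empty subset of a single coset of $H_k(\mA)$ is reachable}. The two endpoints of the induction are exactly what is needed. For $k=0$ we have $H_0(\mA)=\{0\}$, so a coset is a single state and $(\star_0)$ merely asserts that each set $\Z_n\setminus\{j\}$ is reachable; this is clear because $\excl(ab^j)=j$, whence $\Z_n\setminus\{j\}=\Z_n\dt ab^j$. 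For $k=\ell$ the only coset of $H_\ell(\mA)=\Z_n$ is $\Z_n$ itself, so $(\star_\ell)$ says that every non-empty proper subset is reachable; together with the trivial reachability of $\Z_n$, this is complete reachability.

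The engine of the inductive step is a merging lemma read off from \eqref{eq:exclprod} and \eqref{eq:duplprod}: if $w$ is a word with $\excl(w)\subseteq C$, where $C=\Z_n\setminus T$, and if every state having at least two preimages under $w$ lies in $T$, then the set $S=\{q\in\Z_n\mid q\dt w\in T\}$ satisfies $S\dt w=T$ and has defect $|C|-|\excl(w)|<|C|$; hence $T$ is reachable as soon as $S$ is. To pass from $(\star_{k-1})$ to $(\star_k)$ I would, using translation invariance of the family of reachable sets (it is closed under the action of $b$), assume $C\subseteq H_k(\mA)$. If $C$ is contained in a single coset of $H_{k-1}(\mA)$, then $(\star_{k-1})$ applies verbatim. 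Otherwise $C$ meets at least two cosets of $H_{k-1}(\mA)$ inside $H_k(\mA)$, and here I would invoke the level-$k$ difference set from \eqref{eq:defk}: since $D_k(\mA)$ generates $H_k(\mA)$, the Cayley digraph $\Cay(\Z_n,D_k(\mA))$ restricted to $H_k(\mA)$ is strongly connected by Lemma~\ref{lem:cayley}, so the cosets of $H_{k-1}(\mA)$ lying in $H_k(\mA)$ are linked by words $w$ with $0\in\excl(w)\subseteq H_{k-1}(\mA)$, $|\excl(w)|\le k$, and a duplicate state in a prescribed neighbouring coset. Choosing such a $w$ with (a translate of) its excluded set sitting inside $C$ and a merge-target inside $T$, the merging lemma reduces the target to a set $S$ of strictly smaller defect, to which the induction can be reapplied.

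The main obstacle is precisely the bookkeeping hidden in the last sentence. Two requirements must be met simultaneously and they pull against each other: the excluded set of the chosen word must be swallowed by $C$ (so that the defect genuinely drops), while all of its multiple-preimage states must avoid $C$ (so that no new defect is created); and, crucially, the complement of the reduced set $S$ must again lie in the class governed by the induction hypothesis, namely within a single coset of the relevant subgroup. Because neither $a$ nor $b$ respects the coset partition, the complement can ``wander'' out of $H_k(\mA)$ under the passage $T\mapsto S$, and controlling this wandering is the delicate point. I expect to tame it by exploiting translation invariance together with the coset structure of the level-$k$ difference set --- the analogue for $D_k(\mA)$ of the fact, proved in Proposition~\ref{prop:d0coset}, that $D_1^0(\mA)$ is a union of cosets of the subgroup generated by $r$ --- which forces the excluded sets available at level $k$ to be aligned with the $H_{k-1}(\mA)$-coset structure and thereby keeps the reduced complements inside the class on which $(\star_{k-1})$ can be invoked. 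Verifying that this alignment is strong enough to close the induction is, I anticipate, the technical heart of the argument.
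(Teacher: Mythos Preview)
Your proposal shares the essential ingredients with the paper's proof---words witnessing $D_k$, the Cayley-graph connectivity on $H_k$, and a defect-reduction step---but organizes the induction differently, and that difference is where the gap lies. The paper does \emph{not} induct on the level~$k$; it inducts directly on the defect $n-|S|$. For a given proper $S$, it picks $k$ to be the \emph{maximal} index such that $S$ is a union of cosets of $H_{k-1}$. Then any $q\notin S$ drags the whole coset $H_{k-1}+q$ out of $S$, so after translating a level-$k$ word $w$ by $b^q$ one has $\excl(wb^q)\subseteq H_{k-1}+q\subseteq\Z_n\setminus S$. This single observation guarantees that every state of $S$ has a preimage under $wb^q$, and the Cayley-graph argument supplies a duplicate state inside~$S$; hence one gets a set $T$ with $|T|=|S|+1$ and $T\dt wb^q=S$. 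Nothing is assumed about the shape of $\Z_n\setminus T$: the next step simply recomputes $k$ for~$T$.

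In your scheme, by contrast, you must keep the complement confined to a single $H_k$-coset in order to reinvoke $(\star_k)$ or fall back to $(\star_{k-1})$, and as you yourself note, the passage $T\mapsto S$ does not respect that constraint. The remedy you propose---an analogue of Proposition~\ref{prop:d0coset} for $D_k(\mA)$---is neither stated nor proved anywhere, and even if some coset structure on $D_k^0(\mA)$ held, it would control where the \emph{duplicate} states sit, not where the preimages of $C\setminus\excl(w)$ land; those preimages form the new complement and are governed by the global action of $w$, which has no reason to preserve cosets. So the obstacle you flag is genuine and your suggested fix does not close it. The clean way out is the paper's: drop $(\star_k)$, induct on defect, and let $k$ be determined by $S$ at each step via the ``maximal $k$ with $S$ a union of $H_{k-1}$-cosets'' criterion.
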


\begin{proof}
As $\mA$ is fixed, we write $D_k$ and $H_k$ instead of $D_k(\mA)$ and, resp., $H_k(\mA)$.

Take any non-empty subset $S\subseteq\Z_n$. We prove that $S$ is reachable in $\mathrsfs{A}$ by induction on $n-|S|$. If $n-|S|=0$, there is nothing to prove as $S=\Z_n$ is reachable via the empty word. Now let $S$ be a proper subset of $\Z_n$. We aim to find a subset $T\subseteq\Z_n$ such that $S=T\dt v$ for some word $v\in\{a,b\}^*$ and $|T|>|S|$. Since $n-|T|<n-|S|$, the induction assumption applies to the subset $T$ whence $T=\Z_n\dt u$ for some word $u\in\{a,b\}^*$. Then $S=\Z_n\dt uv$ is reachable as required.

Thus, fix a non-empty subset $S\subsetneqq\Z_n$. Since cosets of the trivial subgroup $H_0$ are singletons, $S$ is a union of cosets of $H_0$. On the other hand, since $H_\ell=\Z_n$, the only coset of $H_\ell$ strictly contains $S$, and so $S$ is not a union of cosets of $H_\ell$. Now choose $k\ge1$ to be the maximal number for which $S$ is a union of cosets of the subgroup $H_{k-1}$. The subgroup $H_k$ already has a coset, say, $H_k+t$ being neither contained in $S$ nor disjoint with $S$; in other words, $\varnothing\ne S\cap (H_k+t)\subsetneqq H_k+t$.

By Lemma~\ref{lem:cayley}, the coset $H_k+t$ serves as the vertex set of a \scc\ of the Cayley digraph $\Cay(\Z_n,D_k)$. Therefore, some edge of $\Cay(\Z_n,D_k)$ connects $(H_k+t)\setminus S$ with $S\cap(H_k+t)$ in this \scc, that is, the head $q$ of this edge lies in $(H_k+t)\setminus S$ while its tail $p$ belongs to $S\cap(H_k+t)$. Let $p'=p-q$; then $p'\in D_k$ by the definition of the Cayley digraph. By \eqref{eq:defk} there exists a word $w\in\{a,b\}^*$ such that $p'\in\dupl(w)$ and $\excl(w)\subseteq H_{k-1}$. Then $p=p'+q=p'\dt b^q\in\dupl(w)\dt b^q=\dupl(wb^q)$ and $\excl(wb^q)=\excl(w)\dt b^q=\excl(w)+q\subseteq H_{k-1}+q$. From $p\in \dupl(wb^q)$ we conclude that there exist $p_1,p_2\in\Z_n$ such that $p=p_1\dt wb^q=p_2\dt wb^q$. Since $S$ is a union of cosets of the subgroup $H_{k-1}$, the fact that $q\notin S$ implies that the whole coset $H_{k-1}+q$ is disjoint with $S$, and the inclusion $\excl(wb^q)\subseteq H_{k-1}+q$ ensures that $S$ is disjoint with $\excl(wb^q)$. Therefore, for every $s\in S\setminus\{p\}$, there exists a state $s'\in\Z_n$ such that $s'\dt wb^q=s$. Now letting $T=\{p_1,p_2\}\cup\bigl\{s'\mid s\in S{\setminus}\{p\}\bigr\}$, we conclude that $S=T\dt wb^q$ and $|T|=|S|+1$.
\end{proof}

For an illustration, return one last time to the DFA $\mathrsfs{E}'_{12}$ shown in Fig.~\ref{fig:e12}. We have seen that the subgroup $H_1(\mathrsfs{E}'_{12})$ consists of even residues modulo 12. Inspecting the word $ab^3a$ gives $\excl(ab^3a)=\{0,8\}\subseteq H_1(\mathrsfs{E}'_{12})$ and $1\in\dupl(ab^3a)$, whence $1\in D_2(\mathrsfs{E}'_{12})$. Therefore the subgroup $H_2(\mathrsfs{E}'_{12})$ generated by $D_2(\mathrsfs{E}'_{12})$ is equal to $\Z_{12}$, and $\mathrsfs{E}'_{12}$ is a \cran\ by Proposition~\ref{prop:sufficientnew}.

To illustrate the next level of the construction \eqref{eq:defk}, consider the standardized DFA $\mE_{48}=\langle \Z_{48},\{a,b\}\rangle$ shown in Fig.~\ref{fig:e48}. We have replaced edges that should have been labeled $a$ and $b$ with solid and, resp., dashed edges and omitted all loops to lighten the picture. The action of $a$ in $\mE_{48}$ is defined by $0\dt a=24\dt a=18$, \ $13\dt a=14$,\  $14\dt a=13$, \  $18\dt a=24$, \ $30\dt a=32$, \ $32\dt a=30$, and $k\dt a=k$ for all other $k\in\Z_{48}$.

\begin{figure}[htb]
\begin{center}
  \begin{tikzpicture}
         \pgfmathsetmacro{\n}{24}
        \foreach \t [evaluate=\t as \teval using int(2*\t)] in {1,...,23} {
        \edef\temp{\noexpand
        \node[fill=white, circle, draw=blue, scale=1] (\teval) at ( {4*cos((360*\t)/\n)}, {4*sin((360*\t)/\n)} )  {\teval};
         }\temp}
         \foreach \t [evaluate=\t as \teval using int(2*\t-1)] in {1,...,\n} {
         \edef\temp{\noexpand
         \node[fill=white, circle, draw=blue, scale=1] (\teval) at ( {5.5*cos((360*\t)/\n-7.5)}, {5.5*sin((360*\t)/\n-7.5)} ) {\teval};
         }\temp}
         \node[fill=white, circle, draw=blue, scale=1] (0) at ( {4}, {0} ) {$0$};
		 \foreach \t [evaluate=\t as \teval using int(\t + 1)]  in {0,1,...,46}{
		 	\draw
		 	(\t) edge[->, dashed ] (\teval);
         }
         \draw
         	(47) edge[->, dashed] (0)
         	(0) edge[->] (18)
         	(18) edge[->, bend left = 60] (24)
         	(24) edge[->, bend right = 15] (18)
         	(14) edge[->, bend right=20] (13)
         	(13) edge[->, bend right=20] (14)
            (30) edge[->, bend right=40] (32)
         	(32) edge[->, bend right=40] (30);
         \end{tikzpicture}
\end{center}
\caption{The DFA $\mE_{48}=\langle\Z_{48},\{a,b\}\rangle$ with $H_2(\mE_{48})\ne\Z_{48}$. Solid and dashed edges show the action of $a$ and, resp., $b$; loops are not shown}\label{fig:e48}
\end{figure}

One can calculate that $D_1(\mE_{48})=\{18,24,42\}$ whence the subgroup $H_1(\mE_{48})$ consists of all residues divisible by 6. Computing $D_2(\mE_{48})$, one sees that this set consists of even residues and contains 2 (due to the word $ab^{32}a$ that has $\excl(ab^{32}a)=\{0,30\}\subseteq H_1(\mE_{48})$ and  $\dupl(ab^{32}a)=\{2,18\}$). Hence the subgroup $H_2(\mE_{48})$ consists of all even residues. Finally, the word $ab^{24}ab^{12}ab^{8}$ has $\{0,8,20\}\subseteq H_1(\mE_{48})$ as its excluded set while its duplicate set contains 13. Hence $13\in D_3(\mathrsfs{E}_{48})$ and the subgroup $H_3(\mathrsfs{E}_{48})$ coincides with $\Z_{48}$. We conclude that the DFA $\mathrsfs{E}_{48}$ is completely reachable by Proposition~\ref{prop:sufficientnew}.

As mentioned, the subgroups of $(\Z_n,+)$ ordered by inclusion correspond to the divisors of $n$ ordered by division whence for any standardized DFA $\mA$ with $n$ states, the number of different subgroups of the form $H_k(\mA)$ is $O(\log n)$. Therefore, if the subgroup sequence $H_0(\mA)\subseteq H_1(\mA)\subseteq\dots\subseteq H_k(\mA)\subseteq\dots$ strictly grows at each step, then it reaches $\Z_n$ after at most $O(\log n)$ steps, and by   Proposition~\ref{prop:sufficientnew} $\mA$ is a \cran. What happens if the sequence stabilizes earlier? Our next result answers this question.

\begin{proposition}
\label{prop:stabilization}
If for a standardized DFA $\mA=\langle \Z_n,\{a,b\}\rangle$, there exists $\ell$ such that $H_\ell(\mA)=H_{\ell+1}(\mA)\subsetneqq\Z_n$, then $\mA$ is not completely reachable.
\end{proposition}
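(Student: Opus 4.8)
The plan is to deduce that the stabilized subgroup $H:=H_\ell(\mA)$ is $a$-invariant and then invoke Proposition~\ref{prop:invsubgroup}, which immediately gives that $\mA$ is not completely reachable (recall $H\subsetneqq\Z_n$). First I would isolate the sole consequence of the hypothesis that is needed. Since $D_{\ell+1}(\mA)\subseteq H_{\ell+1}(\mA)=H$, the definition \eqref{eq:defk} amounts to the closure property that every word $w$ with $0\in\excl(w)\subseteq H$ and $|\excl(w)|\le\ell+1$ has $\dupl(w)\subseteq H$; call this $(\star)$. (Here $\ell\ge1$, because $\dupl(a)\in H_1(\mA)\setminus\{0\}$.) I then set $A=\{x\in\Z_n\mid x\dt a\in H\}$ and note $0\in A$, as $0\dt a=\dupl(a)\in H_1(\mA)\subseteq H$. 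Everything reduces to the claim that $A$ is a union of cosets of $H$: the coset of $0$ is $H$ itself, so the claim forces $H\subseteq A$, that is, $H\dt a\subseteq H$.

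To prove that claim I would induct on $j=1,\dots,\ell$ to show that $A$ is a union of cosets of $H_j(\mA)$, the base $H_0(\mA)=\{0\}$ being vacuous. For the inductive step, fix $d\in D_j(\mA)$ and $m\in A$, aiming at $m+d\in A$. Using \eqref{eq:defk}, pick a word $u$ with $0\in\excl(u)\subseteq H_{j-1}(\mA)$, $|\excl(u)|\le j$ and $d\in\dupl(u)$, and consider $v=ub^{m}a$. Appending $b^{m}$ translates both distinguished sets by $m$, so $\excl(ub^{m})=\excl(u)+m\subseteq H_{j-1}(\mA)+m$ and $m+d\in\dupl(ub^{m})$. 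By the induction hypothesis together with $m\in A$, the whole coset $H_{j-1}(\mA)+m$ lies in $A$; hence every preimage under $a$ of a nonzero element of $\excl(v)$ sits in this coset and is therefore sent into $H$, which via \eqref{eq:exclprod} forces $\excl(v)\subseteq H$. Moreover $|\excl(v)|\le|\excl(u)|+1\le\ell+1$, since $a$ has defect $1$. Now \eqref{eq:duplprod} shows $(m+d)\dt a\in\dupl(v)$, so $(\star)$ applies to $v$ and yields $(m+d)\dt a\in H$, i.e.\ $m+d\in A$. Thus $A+D_j(\mA)\subseteq A$, and since $A$ is finite and $H_j(\mA)=\langle D_j(\mA)\rangle$, closure under adding each generator upgrades to $A+H_j(\mA)=A$, completing the step.

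I expect the inductive step to be the main obstacle, because it requires juggling two constraints at once: the excluded set of $v$ must stay inside the single coset $H$ (this is exactly where the induction hypothesis and the standardization facts $0,\dupl(a)\in H$ enter, confining the unique collision of $a$ to $H$), while its defect must not exceed $\ell+1$ so that $(\star)$ is usable. A second point to get right is the finite-group closure turning $D_j(\mA)$-invariance of $A$ into $H_j(\mA)$-invariance, legitimate because $A+d\subseteq A$ forces $A+d=A$ and hence invariance under $\langle d\rangle$. Once the induction reaches $j=\ell$, the set $A$ is a union of cosets of $H$; as $0\in A$, this gives $H\subseteq A$, that is $H\dt a\subseteq H$. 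Hence $H$ is a proper $a$-invariant subgroup of $(\Z_n,+)$, and Proposition~\ref{prop:invsubgroup} shows that $\mA$ is not completely reachable.
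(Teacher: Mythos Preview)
Your argument is correct and follows the same high-level route as the paper: reduce to showing that $H:=H_\ell(\mA)$ is $a$-invariant, then invoke Proposition~\ref{prop:invsubgroup}. The combinatorial engine is also identical---manufacture words of the form $ub^{x}a$ from witnesses $u$ for $D_j$, control $\excl(ub^{x}a)$ via \eqref{eq:exclprod} and the defect bound, and read off membership of $(x+d)\dt a$ in $\dupl(ub^{x}a)$ via \eqref{eq:duplprod}.

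Where you differ is in the bookkeeping. The paper proves directly that $H_k\dt a\subseteq H_\ell$ for $k=0,\dots,\ell$, and for the step from $k$ to $k{+}1$ it writes $p\in H_{k+1}$ as $q+d_1+\cdots+d_m$ with $q\in H_k$ and runs a second, inner induction on $m$. You instead strengthen the inductive hypothesis to ``$A$ is a union of $H_{j-1}$-cosets'', which lets you shift by a single $m\in A$ and immediately absorb the whole translated excluded set $\excl(u)+m\subseteq H_{j-1}+m$ into $A$; the inner induction is then replaced by the one-line finite-group observation that $A+d\subseteq A$ forces $A+\langle d\rangle=A$. This packaging is a bit cleaner and buys you a single induction instead of a nested one, at the cost of carrying a slightly stronger hypothesis. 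Substantively, the two proofs are the same.
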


\begin{proof}
As in the proof of Proposition~\ref{prop:sufficientnew}, we use $D_k$ and $H_k$ instead of $D_k(\mA)$ and, resp., $H_k(\mA)$ in our arguments.

It suffices to prove the following claim:

\noindent\textbf{Claim}: \emph{the equality $H_\ell=H_{\ell+1}$ implies that the subgroup $H_\ell$ is $a$-invariant}.

Indeed, since $H_\ell\subsetneqq\Z_n$, we get a proper $a$-invariant subgroup, and  Proposition~\ref{prop:invsubgroup} then shows that $\mA$ is not completely reachable.

Technically, it is more convenient to show that if $H_\ell=H_{\ell+1}$, then $H_k\dt a\subseteq H_\ell$ for every $k=0,1,\dots,\ell$. We induct on $k$. The base $k=0$ is clear since $H_0=\{0\}$ and $0\dt a=\dupl(a)\in D_1\subseteq H_1\subseteq H_\ell$.

Let $k<\ell$ and assume $H_k\dt a\subseteq H_\ell$; we aim to verify that $p\dt a\in H_\ell$ for every $p\in H_{k+1}$. Since the subgroup $H_{k+1}$ is generated by $D_{k+1}$ and contains $H_{k}$, we may choose a representation of $p$ as the sum
\[
p=q+d_1+\dots+d_m, \quad q\in H_k,\ d_1,\dots,d_m\in D_{k+1}\setminus H_k,
\]
with the least number $m$ of summands from $D_{k+1}\setminus H_k$. We show that $p\dt a\in H_\ell$ by induction on $m$. If $m=0$, we have $p=q\in H_k$ and $p\dt a\in H_\ell$ since $H_k\dt a\subseteq H_\ell$.

If $m>0$, we write $p$ as $p=d_1+s$ where $s=q+d_2+\dots+d_m$. By \eqref{eq:defk}, there exists a word $w\in\{a,b\}^*$ such that $d_1\in\dupl(w)$, $0\in\excl(w)\subseteq H_{k}$ and $|\excl(w)|\le k+1$. Consider the word $wb^sa$. We have $p\dt a=(d_1+s)\dt a=d_1\dt b^sa$, and the equality \eqref{eq:duplprod} gives $p\dt a\in\dupl(wb^sa)$. From the equality \eqref{eq:exclprod}, we get
$\excl(wb^sa)=(\excl(w)+s)\dt a\cup\{0\}$ if $\dupl(a)a^{-1}$ is either contained in or disjoint with $\excl(w)+s$, and $\excl(wb^sa)=\bigl((\excl(w)+s)\setminus\dupl(a)a^{-1}\bigr)\dt a\cup\{0\}$ if $|\dupl(a)a^{-1}\cap(\excl(w)+s)|=1$. In any case, we have the inclusion
\begin{equation}
\label{eq:inclusion}
\excl(wb^sa)\subseteq(\excl(w)+s)\dt a\cup\{0\}
\end{equation}
and the inequality
\begin{equation}
\label{eq:inequality}
|\excl(wb^sa)|\le|(\excl(w)+s))\dt a|+1\le|\excl(w))|+1\le (k+1)+1\le \ell+1.
\end{equation}
For any $t\in\excl(w)\subseteq H_{k}$, the number of summands from $D_{k+1}\setminus H_k$ in the sum $t+s=t+q+d_2+\dots+d_m$ is less than $m$. By the induction assumption, we have $(t+s)\dt a\in H_\ell$. Hence, $(\excl(w)+s)\dt a\subseteq H_\ell$, and since 0 also lies in the subgroup $H_\ell$, we conclude from \eqref{eq:inclusion} that $\excl(wb^sa)\subseteq H_\ell$. From this and the inequality \eqref{eq:inequality}, we see that the word $wb^sa$ satisfies the conditions of the definition of $D_{\ell+1}$ (cf.\ \eqref{eq:defk}) whence every state in $\dupl(wb^sa)$ belongs to $D_{\ell+1}$. We have observed that $p\dt a\in\dupl(wb^sa)$. Hence $p\dt a\in D_{\ell+1}\subseteq H_{\ell+1}$. Since $H_\ell=H_{\ell+1}$, we have $p\dt a\in H_\ell$, as required.
\end{proof}

Now we deduce a criterion for complete reachability of binary automata.

\begin{theorem}
\label{thm:binary}
A binary DFA $\mA$ with $n$ states is completely reachable if and only if either $n=2$ and $\mA$ is the flip-flop or one of the letters of $\mA$ acts as a cyclic permutation of the state set, the other letter has defect $1$, and in the standardized DFA $\langle\Z_n,\{a,b\}\rangle$ syntactically equivalent to $\mA$, no proper subgroup of $(\Z_n,+)$ is $a$-invariant.
\end{theorem}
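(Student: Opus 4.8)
The plan is to prove both implications by funneling an arbitrary binary DFA through the reductions of Section~\ref{sec:background} and then invoking the three propositions of Sections~\ref{sec:necessary} and~\ref{sec:construction}. For the necessity direction, I would start from a completely reachable $\mA$ and recall that such an automaton must possess a letter of defect~1, since otherwise no subset of size $n-1$ is reachable; say $a$ has defect~1. If the other letter also fails to be a permutation, then only $Q\dt a$ and $Q\dt b$ can be reachable among the subsets of size $n-1$, forcing $n=2$ and identifying $\mA$ with the flip-flop. Otherwise $b$ acts as a permutation, and Lemma~\ref{lem:cyclic} upgrades this to a cyclic permutation, so $\mA$ is syntactically equivalent to a standardized DFA $\langle\Z_n,\{a,b\}\rangle$. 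As complete reachability depends only on the transition monoid, the standardized DFA is completely reachable too, and Proposition~\ref{prop:invsubgroup} then yields that no proper subgroup of $(\Z_n,+)$ is $a$-invariant. This exhausts the disjunction in the statement.

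For the sufficiency direction, the flip-flop branch is an immediate verification: for the automaton of Fig.~\ref{fig:flip-flop} one has $Q\dt\varepsilon=\{0,1\}$, $Q\dt a=\{0\}$, and $Q\dt b=\{1\}$, so all three non-empty subsets are reachable. In the remaining branch I would pass to the equivalent standardized DFA $\langle\Z_n,\{a,b\}\rangle$ and exploit the ascending chain $H_0(\mA)\subseteq H_1(\mA)\subseteq\cdots$ from Section~\ref{sec:construction}. Since every $H_k(\mA)$ lies among the finitely many (indeed $O(\log n)$) subgroups of $(\Z_n,+)$, the chain must stabilize at some $\ell$ with $H_\ell(\mA)=H_{\ell+1}(\mA)$; and because each term is determined by its predecessor through \eqref{eq:defk}, this forces $H_k(\mA)=H_\ell(\mA)$ for all $k\ge\ell$.

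The crux is to rule out stabilization at a proper subgroup, and this is exactly what the Claim inside the proof of Proposition~\ref{prop:stabilization} supplies in contrapositive form: whenever $H_\ell(\mA)=H_{\ell+1}(\mA)\subsetneqq\Z_n$, the stabilized subgroup $H_\ell(\mA)$ is proper and $a$-invariant, contradicting our hypothesis. Hence the stabilized value cannot be a proper subgroup, so $H_\ell(\mA)=\Z_n$, and Proposition~\ref{prop:sufficientnew} finally delivers complete reachability of the standardized DFA and therefore of $\mA$. The hard part will be precisely this last linkage: recognizing that Proposition~\ref{prop:stabilization} is engineered so that its hypothesis ``$H_\ell=H_{\ell+1}\subsetneqq\Z_n$'' manufactures a proper $a$-invariant subgroup, thereby converting the abstract non-invariance assumption into the concrete guarantee that the subgroup sequence climbs all the way up to $\Z_n$. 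Once that conversion is in place, the two propositions of Section~\ref{sec:construction} dovetail into a genuine dichotomy—either the chain reaches $\Z_n$ and $\mA$ is completely reachable, or it stabilizes early and exposes a forbidden invariant subgroup—and the theorem follows by combining this dichotomy with the reductions above.
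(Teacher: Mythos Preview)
Your proposal is correct and follows essentially the same route as the paper: necessity via the reductions of Section~\ref{sec:background} together with Proposition~\ref{prop:invsubgroup}, and sufficiency by using the Claim inside the proof of Proposition~\ref{prop:stabilization} to force the chain $H_0(\mA)\subseteq H_1(\mA)\subseteq\cdots$ to reach $\Z_n$, after which Proposition~\ref{prop:sufficientnew} finishes. Your write-up is simply more explicit about the flip-flop branch and the stabilization argument than the paper's terse version.
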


\begin{proof}
Necessity follows from the reductions in Sect.~\ref{sec:background} and Proposition~\ref{prop:invsubgroup}.

For sufficiency, we can assume that $\mA=\langle\Z_n,\{a,b\}\rangle$ is standardized. If no proper subgroup of $(\Z_n,+)$ is $a$-invariant, then the claim from the proof of Proposition~\ref{prop:stabilization} implies that the sequence $H_0(\mA)\subseteq H_1(\mA)\subseteq\dots\subseteq H_k(\mA)\subseteq\dots$ strictly grows as long as the subgroup $H_k(\mA)$ remains proper. Hence, $H_\ell(\mA)=\Z_n$ for some $\ell$ and $\mA$ is a \cran\ by Proposition~\ref{prop:sufficientnew}.
\end{proof}

\begin{remarknew}
\label{rem:h1}
The proof of Theorem \ref{thm:binary} shows that only subgroups that contain $H_1(\mA)$ matter. Therefore, one can combine Theorem~\ref{thm:sufficient}, Proposition~\ref{prop:cayley} and Theorem \ref{thm:binary} as follows:
\emph{a standardized DFA $\mA=\langle\Z_n,\{a,b\}\rangle$ is completely reachable if and only if either $H_1(\mA)=\Z_n$ or no proper subgroup of $(\Z_n,+)$ containing the subgroup $H_1(\mA)$ is $a$-invariant}.
\end{remarknew}

The condition of Theorem \ref{thm:binary} can be verified in low polynomial time. We sketch the corresponding algorithm.

Given a binary DFA $\mA$ with $n$ states, we first check if $n=2$ and $\mA$ is the flip-flop. If \textbf{yes}, $\mA$ is completely reachable. If \textbf{not}, we check whether one of the letters of $\mA$ acts as a cyclic permutation of the state set while the other letter has defect 1. If \textbf{not}, $\mA$ is not completely reachable. If \textbf{yes}, we pass to the standardized DFA $\langle\Z_n,\{a,b\}\rangle$ syntactically equivalent to $\mA$. As a preprocessing, we compute and store the set $\{(k,k\dt a)\mid k\in\Z_n\}$.

The rest of the algorithm can be stated in purely arithmetical terms. Call a positive integer $d$ a \emph{nontrivial divisor} of $n$ if $d$ divides $n$ and $d\ne 1,n$. We compute all nontrivial divisors of $n$ by checking through all integers $d=2,\dots,\lfloor\sqrt{n}\rfloor$: if such $d$ divides $n$, we store $d$ and $\frac{n}d$. If for some nontrivial divisor $d$ of $n$, all numbers $(td)\dt a$ with $t=0,1,\dots,\frac{n}d-1$ are divisible by $d$, then $d$ generates a proper $a$-invariant subgroup in $(\Z_n,+)$ and $\mA$ is not completely reachable. If for every nontrivial divisor $d$ of $n$, there exists $t\in\{0,1,\dots,\frac{n}d-1\}$ such that $(td)\dt a$ is not divisible by $d$, then no proper subgroup of $(\Z_n,+)$ is $a$-invariant and $\mA$ is completely reachable.

To estimate the time complexity of the described procedure, observe that one has to check at most $\frac{n}d$ numbers for each nontrivial divisor $d$ of~$n$. Clearly,
\[
\sum_{\substack{1<d<n\\d|n}}\frac{n}d=\sum_{\substack{1<d<n\\d|n}}d=\sigma(n)-n-1,
\]
where $\sigma(n)$ stands for the sum of all divisors of $n$, a well-studied function in the theory of numbers; see, e.g., \cite[Chapters XVI--XVIII]{Hardy:2008}. It is known that $\limsup\frac{\sigma(n)}{n\log\log n}=e^\gamma$ where $\gamma$ is the Euler--Mascheroni constant \cite[Theorem 323]{Hardy:2008}; this implies that the number of checks in our procedure is $O(n\log\log n)$. The total complexity depends on the time spent for verifying the divisibility condition. If one uses the transdichotomous model~\cite{FrWi93} (as suggested by one of the referees), assuming constant time for division, the whole procedure can be implemented in $O(n\log\log n)$ time.

One can speed up the above algorithm, using Remark~\ref{rem:h1}, which implies that only the divisors $d>1$ of the g.c.d. of $n$ and $0\dt a$ have to be checked. However, the improvement only reduces the constant behind the $O(\ )$ notation.

\section{Conclusion}

We have characterized binary \cra; our characterization leads to an algorithm that given a binary DFA $\mA$, decides whether or not $\mA$ is completely reachable in quasilinear time with respect to the size of $\mA$. Very recently, after the original version of the present paper was submitted, Ferens and Szyku\l{}a~\cite{FeSz22} have devised a polynomial-time algorithm for recognizing complete reachability of arbitrary DFAs, but the complexity of their algorithm is higher.

Our results heavily depend on the fact that apart from a single exception, binary \cra\ are \emph{circular}, that is, have a letter acting as a cyclic permutation of the state set. In the literature, one can find several situations when a problem that remains open in general, admits quite a nontrivial solution when restricted to circular automata. Here we mention only Dubuc's result \cite{Dubuc98} on the \v{C}ern\'{y} conjecture and the recent paper by Yong He \emph{et al} \cite{He2021} on Trahtman's conjecture. It appears that circular automata may behave in a similar way with respect to complete reachability, and our follow-up work aims at extending the results of the present paper to arbitrary (not necessarily binary) circular automata. We also plan to study an `orthogonal' extension, aiming to characterize \cra\ in which one letter has defect~1 while the other letters act as permutations and generate a group that transitively acts on the state set.

\medskip

\noindent\textbf{Acknowledgement.} We thank the anonymous reviewers of the conference version of our paper for their careful reading and their many useful comments and suggestions that are incorporated in the present version.

\end{document}